\newtheorem{theorem}{Theorem}[section]
\newtheorem{definition}[theorem]{Definition} 
\newtheorem{lemma}[theorem]{Lemma} 
\newtheorem{remark}[theorem]{Remark} 
\newtheorem{proposition}[theorem]{Proposition}  
\newtheorem{corollary}[theorem]{Corollary}  
\begin{document}

\title{\bfseries An analysis of noise folding for low-rank matrix recovery}

\author{
Jianwen Huang$^{1}$,
Jianjun Wang$^{1,~2}$\thanks{Corresponding author, E-mail: wjjmath@gmail.com, wjj@swu.edu.cn(J.J. Wang)},
Feng Zhang$^{1}$,
Hailin Wang$^{1}$,
Wendong Wang$^{1}$
\\
\emph{\small $^{1}$School of Mathematic $\&$ Statistics, Southwest University, Chongqing~400715}\\
\emph{\small $^{2}$School of Artificial Intelligence, Southwest University, Chongqing~400715}
}
\date{}
\maketitle

\subparagraph{Abstract.}
Previous work regarding low-rank matrix recovery has concentrated on the scenarios in which the matrix is noise-free and the measurements are corrupted by noise. However, in practical application, the matrix itself is usually perturbed by random noise preceding to measurement. This paper concisely investigates this scenario and evidences that, for most measurement schemes utilized in compressed sensing, the two models are equivalent with the central distinctness that the noise associated with (\ref{eq.3}) is larger by a factor to $mn/M$, where $m,~n$ are the dimension of the matrix and $M$ is the number of measurements. Additionally, this paper discusses the reconstruction of low-rank matrices in the setting, presents sufficient conditions based on the associating null space property to guarantee the robust recovery and obtains the number of measurements. Furthermore, for the non-Gaussian noise scenario, we further explore it and give the corresponding result. The simulation experiments conducted, on the one hand show effect of noise variance on recovery performance, on the other hand demonstrate the verifiability of the proposed model.
\subparagraph{Key words.}
Compressed sensing; low-rank matrix recovery; noise folding; null space property; restricted isometry property.

\section{Introduction}
\label{sec1}

In recent years, low-rank matrix recovery (LRMR) from noisy measurements, with applications in collaborative filtering \cite{Abernethy et al}, machine learning \cite{Chang et al} \cite{Lin and Li}, control \cite{Xu et al 2018}, quantum tomography \cite{Recht et al}, recommender systems \cite{Mazumder et al}, and remote sensing \cite{Y Wang et al 2017}, has gained significant interest. Formally, this problem considers linear measurements of a (approximately) low-rank matrix $X\in\mathbb{R}^{m\times n}$ of the following form
\begin{align}\label{eq.1}
y=\mathcal{A}(X)+w,
\end{align}
where $y\in\mathbb{R}^M$ is the observed vector, $w\in\mathbb{R}^M$ is an additive noise term, and $\mathcal{A}:\mathbb{R}^{m\times n}\to\mathbb{R}^M$ is a linear measurement map, which is determined by
\begin{align}\label{eq.2}
\mathcal{A}(X)=\left[\mbox{tr}(X^{\top}A^{(1)}),\mbox{tr}(X^{\top}A^{(2)}),\cdots,\mbox{tr}(X^{\top}A^{(M)})\right]^{\top}.
\end{align}
Here, $\mbox{tr}(\cdot)$ is the trace function, $X^{\top}$ is the transposition of $X$ and $A^{(1)}, A^{(2)}, \cdots, A^{(M)}$ are called measurement matrices. Each $A^{(i)}$ can be equal to a row of a compressive measurement matrix, and $\mathcal{A}(X)$ could be written as
\begin{align}\label{eq.6}
\mathcal{A}(X)=\left[\begin{matrix}
\mbox{vec}^{\top}(A^{(1)})\\
\mbox{vec}^{\top}(A^{(2)})\\
\vdots\\
\mbox{vec}^{\top}(A^{(M)})
\end{matrix}\right]\mbox{vec}(X):=A\mbox{vec}(X),
\end{align}
where $\mbox{vec}(X)$ is a long vector gained by stacking the columns of $X$ and $A$ is an $M\times mn$ matrix defined by (\ref{eq.6}) which associates with the linear measurement map $\mathcal{A}$.

However, the aforementioned model (\ref{eq.1}) only considers the noise introduced at the measurement stage. In a variety of application scenarios, the matrix $X$ to be recovered may also be corrupted by noise. Such issue exists in a great number of applications such as the recovery of a video sequence \cite{Candes et al 2009} \cite{Waters et al}, statistical modeling of hyperspectral imaging \cite{Chakrabarti and Zickler}, robust matrix completion \cite{Chen et al 2011}, and signal processing \cite{Arias-Castro and Eldar} \cite{Peter et al}. Accordingly, it is appropriate to take into the following model account
\begin{align}\label{eq.3}
y=\mathcal{A}(X+Z)+w,
\end{align}
where $Z\in\mathbb{R}^{m\times n}$ denotes the noise on the original matrix. Throughout this paper, we suppose that $w$ is a white noise vector satisfying $\mathbb{E}(w)=0_M$ and $\mbox{Var}(w)=\sigma^2I_M$, and similarly $Z$ is a white noise matrix obeying $\mathbb{E}(Z)=0_{m\times n}$ and $\mbox{Var}(Z)=\sigma^2_0I_{mn}$, independent of $w$. Here and elsewhere in this paper, $I_m$ stands for the identity matrix of order $m$. Under these hypotheses, in the next section, we will reveal that the model (\ref{eq.3}) is equivalent to
\begin{align}
\label{eq.31}\tilde{y}=\mathcal{B}(X)+u,
\end{align}
where $\mathcal{B}$ is a linear measurement map, whose restricted isometry property and spherical section property constants are very close to those of $\mathcal{A}$, and $u$ is white noise with mean zero and covariance matrix $(\sigma^2+mn\sigma^2_0/M)I_M$.

When $m=n$ and the matrices $X=\mbox{diag}(x)~(x\in\mathbb{R}^m)$ and $Z=\mbox{diag}(z)~(z\in\mathbb{R}^m)$ are diagonal, the models (\ref{eq.1}) and (\ref{eq.3}) degenerates to the vector models
\begin{align}\label{eq.23}
y=\tilde{A}x+w,
\end{align}
\begin{align}\label{eq.24}
y=\tilde{A}(x+z)+w,
\end{align}
where $\tilde{A}\in\mathbb{R}^{M\times m}$ is the measurement matrix, and $z$ is the noise on the original signal, for more details, see \cite{Candes et al 2006} \cite{Donoho} \cite{Arias-Castro and Eldar} and \cite{Peter et al}. As far as we know, recently most researchers either only discuss the situation of the noise matrix $Z=0$ in (\ref{eq.3}), or merely think over the vector model (\ref{eq.24}) and its associating sparse recovery problem. Specifically, Arias-Castro and Eldar \cite{Arias-Castro and Eldar} considered the model (\ref{eq.24}) and showed that, for the vast majority of measurement schemes employed in compressed sensing, the two models (\ref{eq.23}) and (\ref{eq.24}) are equivalent with the significant distinction
that the signal-to-noise ratio (SNR) is divided by a factor
proportional to $m/M$. For the model (\ref{eq.3}) with $Z=0$, Recht et al. \cite{Recht et al} showed that the minimum-rank solution can be recovered by solving a convex optimization
problem if a certain restricted isometry property holds for the linear transformation defining the
constraints. More related works can be found in \cite{Kong and Xiu} \cite{Zhang and Li 2018b} and \cite{Chen and Li}.

In this paper, our main work incorporates the following parts: firstly, we investigate the relation between the restricted isometry property constants and the spherical section property constants of $\mathcal{B}$ and $\mathcal{A}$ when $\|I_M-(M/mn)AA^{\top}\|<1/2$; secondly, based on certain properties of the null space of the linear measurement map, we establish a sufficient condition for stable and robust recovery of the low-rank matrix itself contaminated by noise and the corresponding upper bound estimation of recovery error; thirdly, we obtain the minimal amount of measurements regarding the sufficient condition guaranteeing recovery via the nuclear norm minimization; finally, the results of numerical experiments show that the method of nuclear norm minimization is effective in recovering low rank matrices after whitening treatment.

The rest of the paper is constructed as follows. In Section \ref{sec2}, we discuss the relationship between the restricted isometry property constants and spherical section property constants of $\mathcal{B}$ and $\mathcal{A}$ under certain conditions. In Section \ref{sec3}, the recovery of low-rank matrices is thought over via the nuclear norm minimization method and sufficient conditions are established to ensure the robust reconstruction. In Section \ref{sec4}, the sampling number based on null space property that make sure the stable recovery is present. Some simulation experiments are carried out in Section \ref{sec5}.
The proofs of the main results are provided in Section \ref{sec6}. Finally, the conclusion is given in Section \ref{sec7}.

\section{RIP and SSP Analysis}
\label{sec2}

In order to derive our results, the model (\ref{eq.3}) can be transformed into
\begin{align}\label{eq.4}
y=\mathcal{A}(X)+v,
\end{align}
where $v$ is determined by
\begin{align}\label{eq.5}
v=\mathcal{A}(Z)+w=A\mbox{vec}(Z)+w.
\end{align}

Due to the assumption of white noise and independence, one can easily verify that the covariance $\Sigma$ of the noise vector $v$ is equal to $\sigma^2I_M+\sigma^2_0AA^{\top}$. Obviously, $v$ is not white noise like the noise $w$, so the recovery analysis may become more complex.

Set $\theta:=\sigma^2+mn\sigma^2_0/M$, $\Sigma_1:=\Sigma/\theta$, $\tilde{y}:=\Sigma^{-1/2}_1y$, $B:=\Sigma^{-1/2}_1A$, $u:=\Sigma^{-1/2}_1v$.
In order to whiten the noise vector $v$, through multiplying the equation (\ref{eq.4}) by $\Sigma^{-1/2}_1$, then we derive the equivalent equation below
\begin{align}\label{eq.7}
\tilde{y}=B\mbox{vec}(X)+u.
\end{align}
 By applying (\ref{eq.6}), the model (\ref{eq.7}) can be written as \begin{align}\label{eq.25}
\tilde{y}=\mathcal{B}(X)+u,
\end{align}
where
\begin{align}\label{eq.8}
\mathcal{B}(X)=\left[\mbox{tr}(X^{\top}B^{(1)}),\mbox{tr}(X^{\top}B^{(2)}),\cdots,\mbox{tr}(X^{\top}B^{(M)})\right]^{\top},
\end{align}
$\mbox{vec}^{\top}(B^{(i)})=B_{i\cdot}=(\Sigma^{-1/2}_1)_{i\cdot}A$, and $B_{i\cdot}$ denotes the $i$th row of the matrix $B$. Observe that the noise vector $u$ is the white noise and its covariance matrix equals to $\theta I_M$. In order to investigate (\ref{eq.3}), we can utilize the results which are exploited to deal with (\ref{eq.1}), with the central distinctness that the noise corresponding with (\ref{eq.3}) is larger by a factor proportional to $mn/M$. In the case of $M\ll mn$, this gives rise to a large noise amplification or \emph{noise folding}. The specific reason is the linear measurement $\mathcal{A}$ amalgamates all the noise entries in $Z$, even those associated to zero entries in $X$, accordingly it brings about a large noise raise in the compressed sampling.

Our analysis depends on approximating $AA^{\top}$ by $(mn/M)I_M$. Set
\begin{align}\label{eq.9}
\delta:=\left\|I_M-\frac{M}{mn}AA^{\top}\right\|.
\end{align}
Here, $\delta$ weighs the quality approximating $AA^{\top}$ by $(mn/M)I_M$ and $\|\cdot\|$ represents the operator norm on $\mathbb{R}^{M\times M}$. For the rest of this paper, suppose that $\delta$ is small. The assumption not only holds with high probability, but also has been shown in \cite{Vershynin}.

In the following, we investigate what is the relationship between the restricted isometry constants of $\mathcal{B}$ and $\mathcal{A}$.

For each integer $r=1,2,\cdots,n_0$, where $n_0=\min\{m,n\}$, we say that a linear measurement map $\mathcal{A}:\mathbb{R}^{m\times n}\to\mathbb{R}^M$ has the restricted isometry property (RIP) with constants $0<\mu_r\leq\nu_r$ if
\begin{align}\label{eq.10}
\mu_r\|X\|^2_F\leq\|\mathcal{A}(X)\|_2^2\leq \nu_r\|X\|^2_F
\end{align}
holds for all matrices $X\in\mathbb{R}^{m\times n}$ of rank at most $r$ (abbreviated as $r$-rank), where $\|X\|_F:=\sqrt{\left<X,X\right>}=\sqrt{\mbox{tr}(X^{\top}X)}$. The theorem below presents the relationship between the RIP constants of $\mathcal{B}$ and $\mathcal{A}$. Set $\delta_1=\delta/(1-\delta)$.
\begin{theorem}\label{th.1}
Suppose that $\delta<1/2$ in (\ref{eq.9}) and that the linear measurement map $\mathcal{A}$ fulfills the RIP of order $r$ with constants $0<\mu_r\leq\nu_r$. It holds that the linear measurement map $\mathcal{B}$ obeys the RIP of order $r$ with constants $\mu_r(1-\delta_1)$ and $\nu_r(1-\delta_1)$.
\end{theorem}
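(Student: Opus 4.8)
The plan is to compare the Gram operator of $\mathcal{B}$ with that of $\mathcal{A}$ through the whitening matrix $\Sigma^{-1/2}_1$, so that the whole statement reduces to a spectral estimate for $\Sigma^{-1}_1$. Since $\mbox{vec}^{\top}(B^{(i)})=(\Sigma^{-1/2}_1)_{i\cdot}A$ we have $B=\Sigma^{-1/2}_1A$, hence for every $X\in\mathbb{R}^{m\times n}$,
\[
\|\mathcal{B}(X)\|_2^2=\mbox{vec}(X)^{\top}A^{\top}\Sigma^{-1}_1A\mbox{vec}(X)=\big\langle\Sigma^{-1}_1\mathcal{A}(X),\mathcal{A}(X)\big\rangle,
\]
where $\mathcal{A}(X)=A\mbox{vec}(X)\in\mathbb{R}^M$. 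Thus it suffices to sandwich the symmetric positive definite matrix $\Sigma^{-1}_1$ between multiples of $I_M$; the rank-$r$ hypothesis on $X$ will be used only at the very end, when the resulting bound on $\|\mathcal{A}(X)\|_2^2$ is inserted into the RIP of $\mathcal{A}$.

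Next I would rewrite $\Sigma_1$ in terms of the perturbation $E:=I_M-\tfrac{M}{mn}AA^{\top}$, which satisfies $\|E\|=\delta$ by (\ref{eq.9}). Writing $AA^{\top}=\tfrac{mn}{M}(I_M-E)$, substituting into $\Sigma=\sigma^2I_M+\sigma^2_0AA^{\top}$, and using $\theta=\sigma^2+mn\sigma^2_0/M$, one finds after simplification that $\Sigma_1=\Sigma/\theta=I_M-\beta E$, where $\beta:=mn\sigma^2_0/(M\theta)$. The decisive observation — and the place where the noise-folding factor is tamed — is that $0\le\beta\le1$, since $\theta\ge mn\sigma^2_0/M$; consequently $\|\Sigma_1-I_M\|\le\beta\delta\le\delta<1/2$, and in particular $\Sigma_1$ is invertible.

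I would then control $\Sigma^{-1}_1$ by a Neumann-series estimate. From $\Sigma_1=I_M-\beta E$ with $\|\beta E\|=\beta\delta<1$ one gets $\Sigma^{-1}_1-I_M=\beta E(I_M-\beta E)^{-1}$, so $\|\Sigma^{-1}_1-I_M\|\le\beta\delta/(1-\beta\delta)\le\delta/(1-\delta)=\delta_1$, the last step using that $t\mapsto t/(1-t)$ is increasing and $\beta\le1$. Hence $(1-\delta_1)I_M\preceq\Sigma^{-1}_1\preceq(1+\delta_1)I_M$, which combined with the displayed identity gives $(1-\delta_1)\|\mathcal{A}(X)\|_2^2\le\|\mathcal{B}(X)\|_2^2\le(1+\delta_1)\|\mathcal{A}(X)\|_2^2$ for all $X$. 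Restricting to matrices of rank at most $r$ and invoking the RIP (\ref{eq.10}) of $\mathcal{A}$ with constants $\mu_r\le\nu_r$ yields $\mu_r(1-\delta_1)\|X\|_F^2\le\|\mathcal{B}(X)\|_2^2\le\nu_r(1+\delta_1)\|X\|_F^2$, i.e.\ $\mathcal{B}$ satisfies the RIP of order $r$ with lower constant $\mu_r(1-\delta_1)$ and upper constant $\nu_r(1+\delta_1)$.

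The only genuinely delicate point is the second step: checking that the $mn/M$ amplification hidden in $\theta$ cancels, so that $\|\Sigma_1-I_M\|$ is controlled by $\delta$ rather than by a quantity proportional to $mn/M$. After that, everything is a routine operator-norm perturbation argument, and the hypothesis $\delta<1/2$ is needed only to force $\delta_1<1$, so that the lower constant $\mu_r(1-\delta_1)$ is strictly positive and the conclusion is non-vacuous.
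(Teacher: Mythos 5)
Your proposal is correct and follows essentially the same route as the paper's own proof: bound $\|\Sigma_1-I_M\|\le\delta$ by factoring out $\beta=mn\sigma_0^2/(M\theta)\le1$, pass to $\|\Sigma_1^{-1}-I_M\|\le\delta/(1-\delta)=\delta_1$ by a Neumann-series argument, sandwich the quadratic form $\langle\Sigma_1^{-1}\mathcal{A}(X),\mathcal{A}(X)\rangle$, and then invoke the RIP of $\mathcal{A}$. Note that your upper constant $\nu_r(1+\delta_1)$ is exactly what the paper's proof also delivers, so the $\nu_r(1-\delta_1)$ appearing in the theorem statement is evidently a typographical slip rather than a discrepancy with your argument.
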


\begin{remark}
The theorem shows that under the assumption of $\delta<1/2$ the RIP constants of $\mathcal{B}$ and $\mathcal{A}$ are equivalent.
\end{remark}

\begin{remark}
In the case of $m=n$ and the matrices $X=\mbox{diag}(x)~(x\in\mathbb{R}^m)$ is diagonal with $x$ being $r$-sparse (i.e., the number of non-zero elements in $x$ is $r$ at most), Theorem \ref{th.1} is the same as Proposition $1$ in \cite{Arias-Castro and Eldar}.
\end{remark}

\begin{proof}[Proof of the theorem \ref{th.1}: method 1] The idea is inspired by \cite{Arias-Castro and Eldar}. In order to bound $\|\Sigma_1-I_M\|$ we utilize the definition of $\delta$ in (\ref{eq.9}),
\begin{align}\label{eq.11}
\notag\|\Sigma_1-I_M\|&=\frac{\sigma^2_0mn}{\theta M}\|I_M-\frac{M}{mn}AA^{\top}\|\\
\notag&=\frac{\frac{\sigma^2_0mn}{M}}{\sigma^2+\frac{\sigma^2_0mn}{M}}\delta\\
&\leq \delta.
\end{align}
In the following, by applying the geometric series formula $1/(1-x)=\sum_{k=0}^{\infty}x^k$ for $|x|<1$, $\Sigma^{-1}_1-I$ is expressed as
\begin{align}\label{eq.12}
\Sigma^{-1}_1-I=[I-(I-\Sigma_1)]^{-1}-I=\sum_{k=1}^{\infty}(I-\Sigma_1)^k.
\end{align}
The above power series converges because $\|\Sigma_1-I\|\leq \delta<1$. In order to bound $\|\Sigma^{-1}_1-I\|$, we take operator norms on both sides of the equality (\ref{eq.12}), we get
\begin{align}\label{eq.13}
\notag\|\Sigma^{-1}_1-I\|&\overset{\text{(a)}}{\leq}\sum_{k=1}^{\infty}\|(I-\Sigma_1)^k\|\\
\notag&\overset{\text{(b)}}{\leq}\sum_{k=1}^{\infty}\|I-\Sigma_1\|^k\\
&\overset{\text{(c)}}{\leq}\sum_{k=1}^{\infty}\delta^k=\frac{\delta}{1-\delta}=:\delta_1,
\end{align}
where (a) follows from the triangle inequality, (b) uses the fact that $\|AB\|\leq\|A\|\|B\|$ for all matrices $A$ and $B$ in $\mathbb{R}^{M\times M}$, and (c) is due to (\ref{eq.11}).

Take any $X\in\mathbb{R}^{m\times n}$ satisfying rank at most $r$. Note that
\begin{align}\label{eq.14}
\notag&\|\mathcal{B}(X)\|^2_2-\|\mathcal{A}(X)\|^2_2\\
\notag&=\|B\mbox{vec}(X)\|^2_2-\|A\mbox{vec}(X)\|^2_2\\
&=\mbox{vec}^{\top}(X)A^{\top}(\Sigma^{-1}_1-I)A\mbox{vec}(X).
\end{align}
By employing H$\ddot{o}$lder's inequality, the definition of operator norm and (\ref{eq.13}), we get
\begin{align}\label{eq.15}
\notag&|\mbox{vec}^{\top}(X)A^{\top}(\Sigma^{-1}_1-I)A\mbox{vec}(X)|\\
\notag&\leq\|(A\mbox{vec}(X))^{\top}\|_2\|(\Sigma^{-1}_1-I)A\mbox{vec}(X)\|_2\\
\notag&\leq\|(\Sigma^{-1}_1-I)\|\|A\mbox{vec}(X)\|_2^2\\
&\leq\delta_1\|A\mbox{vec}(X)\|_2^2=\delta_1\|\mathcal{A}(X)\|_2^2.
\end{align}
A combination of (\ref{eq.14}) and (\ref{eq.15}), we get
\begin{align}\label{eq.16}
(1-\delta_1)\|\mathcal{A}(X)\|_2^2\leq\|\mathcal{B}(X)\|_2^2\leq(1+\delta_1)\|\mathcal{A}(X)\|_2^2.
\end{align}
Combining with (\ref{eq.16}) and (\ref{eq.10}), it implies
\begin{align}
\notag\mu_r(1-\delta_1)\|X\|^2_F\leq\|\mathcal{B}(X)\|_2^2\leq\nu_r(1+\delta_1)\|X\|^2_F.
\end{align}
This completes the proof.
\end{proof}

\begin{proof}[Proof of the theorem \ref{th.1}: method 2] We still use the preceding symbols unless specifically stated. Since vectorizing the matrix loses its structural information, we deal directly with it. Set $H=\mbox{Cov}(\mathcal{A}(Z))$. By some calculations,
\begin{align}
\notag H=\sigma^2_0\left[\begin{matrix}
\left<A^{(1)},A^{(1)}\right>~~\left<A^{(1)},A^{(2)}\right>~~\cdots~~\left<A^{(1)},A^{(M)}\right>\\
\left<A^{(2)},A^{(1)}\right>~~\left<A^{(2)},A^{(2)}\right>~~\cdots~~\left<A^{(2)},A^{(M)}\right>\\
\vdots\\
\left<A^{(M)},A^{(1)}\right>~~\left<A^{(M)},A^{(2)}\right>~~\cdots~~\left<A^{(M)},A^{(M)}\right>
\end{matrix}\right]:=\sigma^2_0G.
\end{align}
It follows that $\mbox{Cov}(v)=\sigma^2I_M+\sigma^2_0G:=\Sigma$. Set $\Sigma_1=\Sigma/\theta$ and
\begin{align}\label{eq.35}
\delta:=\left\|I_M-\frac{M}{mn}G\right\|.
\end{align}
It is easy to check that $\mbox{Cov}(u)=\mbox{Cov}(\Sigma^{-1/2}_1v)=\theta I_M$, i.e., $u$ is white noise.
Next we estimate the upper bound of $\|\Sigma_1-I_M\|$. By applying (\ref{eq.35}), we get
\begin{align}\label{eq.36}
\notag\|\Sigma_1-I_M\|&=\frac{\sigma^2_0mn}{\theta M}\|I_M-\frac{M}{mn}G\|\\
\notag&=\frac{\frac{\sigma^2_0mn}{M}}{\sigma^2+\frac{\sigma^2_0mn}{M}}\delta\\
&\leq \delta.
\end{align}
For any $r$-rank matrix $X$, due to $\mathcal{B}=\Sigma^{-1/2}_1\mathcal{A}$ and $\Sigma_1$ is symmetrical, we get
\begin{align}\label{eq.37}
\notag&\|\mathcal{B}(X)\|^2_2-\|\mathcal{A}(X)\|^2_2\\
\notag&=\|\Sigma^{-1/2}_1\mathcal{A}(X)\|^2_2-\|\mathcal{A}(X)\|^2_2\\
\notag&=\left<\Sigma^{-1/2}_1\mathcal{A}(X),\Sigma^{-1/2}_1\mathcal{A}(X)\right>-
\left<\mathcal{A}(X),\mathcal{A}(X)\right>\\
\notag&=\left<\mathcal{A}(X),(\Sigma^{-1/2}_1)^*\Sigma^{-1/2}_1\mathcal{A}(X)\right>-
\left<\mathcal{A}(X),\mathcal{A}(X)\right>\\
\notag&=\left<\mathcal{A}(X),\Sigma^{-1}_1\mathcal{A}(X)\right>-
\left<\mathcal{A}(X),\mathcal{A}(X)\right>\\
&=\left<\mathcal{A}(X),(\Sigma^{-1}_1-I)\mathcal{A}(X)\right>,
\end{align}
where $(\Sigma^{-1/2}_1)^*$ is the conjugate of $\Sigma^{-1/2}_1$. By using H$\ddot{o}$lder's inequality and (\ref{eq.13}), we get
\begin{align}\label{eq.38}
\notag&|\left<\mathcal{A}(X),(\Sigma^{-1}_1-I)\mathcal{A}(X)\right>|\\
\notag&\leq\|\mathcal{A}(X)\|_2\|(\Sigma^{-1}_1-I)\mathcal{A}(X)\|_2\\
\notag&\leq\|\Sigma^{-1}_1-I\|\|\mathcal{A}(X)\|_2^2\\
&\leq\delta_1\|\mathcal{A}(X)\|_2^2.
\end{align}
The remaining proof is the same as Method 1, and we omit it for brevity.
\end{proof}

Next, we present the concept of spherical section property of a linear measurement map.

The spherical section constant of a linear measurement map $\mathcal{A}$ is defined as
\begin{align}
\notag\Delta(\mathcal{A})=\min_{X\in\mathcal{N}(\mathcal{A})\backslash\{0\}}\frac{\|X\|^2_*}{\|X\|^2_F},
\end{align}
and we say $\mathcal{A}$ satisfies the $\Delta$-spherical section property (SSP) if $\Delta(\mathcal{A})\geq\Delta$, where $\|X\|_*$ is the nuclear norm of the matrix $X$, i.e., the sum of its singular values. In the following proposition, we will explore the connection between SSP constants of $\mathcal{A}$ and $\mathcal{B}$.
\begin{proposition}\label{pr.1}
Suppose that the linear measurement map $\mathcal{A}$ satisfies the $\Delta$-SSP with $\Delta>0$. Then the linear measurement map $\mathcal{B}$ obeys the $\Delta$-SSP with $\Delta$.
\end{proposition}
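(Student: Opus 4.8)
The plan is to exploit the fact that, by construction, $\mathcal{B}=\Sigma_1^{-1/2}\mathcal{A}$, where $\Sigma_1^{-1/2}$ is an \emph{invertible} linear operator on $\mathbb{R}^M$. First I would check that $\Sigma_1$ is symmetric positive definite: since $\Sigma=\sigma^2 I_M+\sigma_0^2 AA^{\top}$ (equivalently $\sigma^2 I_M+\sigma_0^2 G$ in the notation of Method 2) is the sum of the positive definite matrix $\sigma^2 I_M$ and the positive semidefinite matrix $\sigma_0^2 AA^{\top}$, it is positive definite, and hence so is $\Sigma_1=\Sigma/\theta$ (note $\theta>0$). Consequently $\Sigma_1^{-1/2}$ exists, is symmetric positive definite, and in particular is a bijection of $\mathbb{R}^M$ onto itself.

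The second step is to transfer this to the null spaces. For any $X\in\mathbb{R}^{m\times n}$ one has $\mathcal{B}(X)=\Sigma_1^{-1/2}\mathcal{A}(X)$, and since $\Sigma_1^{-1/2}$ is injective, $\mathcal{B}(X)=0$ if and only if $\mathcal{A}(X)=0$. Therefore $\mathcal{N}(\mathcal{B})=\mathcal{N}(\mathcal{A})$ exactly. Plugging this identity into the definition of the spherical section constant gives
\begin{align}
\notag\Delta(\mathcal{B})=\min_{X\in\mathcal{N}(\mathcal{B})\backslash\{0\}}\frac{\|X\|_*^2}{\|X\|_F^2}=\min_{X\in\mathcal{N}(\mathcal{A})\backslash\{0\}}\frac{\|X\|_*^2}{\|X\|_F^2}=\Delta(\mathcal{A}),
\end{align}
so $\Delta(\mathcal{B})=\Delta(\mathcal{A})\geq\Delta$, which is precisely the claimed $\Delta$-SSP for $\mathcal{B}$.

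I do not expect a genuine obstacle here. Unlike the RIP, which is a quantitative two-sided norm inequality and therefore gets distorted by the factor $1\pm\delta_1$ arising from $\|\Sigma_1^{-1}-I\|$ in Theorem \ref{th.1}, the SSP depends only on the kernel of the measurement map, and an invertible change of coordinates on the measurement space $\mathbb{R}^M$ leaves that kernel unchanged; hence the constant is preserved exactly rather than up to a factor. The only point deserving a line of justification is the invertibility of $\Sigma_1^{-1/2}$, i.e. the positive definiteness of $\Sigma$, which uses $\sigma^2>0$ (should one wish to permit $\sigma^2=0$, one would instead note that $\Sigma=\sigma_0^2 AA^{\top}$ remains invertible as long as $A$ has full row rank, the standing assumption in this framework).
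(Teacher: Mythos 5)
Your proposal is correct and follows essentially the same route as the paper: both arguments reduce to showing $\mathcal{N}(\mathcal{B})=\mathcal{N}(\mathcal{A})$ via the invertibility of $\Sigma_1^{-1/2}$ and then observing that the SSP constant depends only on the null space. Your explicit justification of the positive definiteness of $\Sigma_1$ is a welcome addition that the paper leaves implicit.
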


\begin{remark}
The proposition indicates that the SSP constants of $\mathcal{B}$ and $\mathcal{A}$ are identical.
\end{remark}

\begin{proof}[Proof of the lemma \ref{pr.1}]
Firstly, we show that $\mathcal{N}(\mathcal{A})=\mathcal{N}(\mathcal{B})$.

For any $X\in\mathcal{N}(\mathcal{A})\backslash\{0\}$, then $\mathcal{A}(X)=0$, i.e. $A\mbox{vec}(X)=0$. Note that $\mathcal{B}(X)=B\mbox{vec}(X)=\Sigma^{-1/2}A\mbox{vec}(X)$. Hence, $\mathcal{B}(X)=0$, namely, $X\in\mathcal{N}(\mathcal{B})\backslash\{0\}$. Therefore, $\mathcal{N}(\mathcal{A})\subseteq\mathcal{N}(\mathcal{B})$. Similarly, we could deduce that $\mathcal{N}(\mathcal{B})\subseteq\mathcal{N}(\mathcal{A})$. Combining with the above facts, $\mathcal{N}(\mathcal{A})=\mathcal{N}(\mathcal{B})$.

Now, we calculate the SSP constant of $\mathcal{B}$. By making use of the definition of SSP, we get
\begin{align}
\notag\Delta(\mathcal{B})&=\min_{X\in\mathcal{N}(\mathcal{B})\backslash\{0\}}\frac{\|X\|^2_*}{\|X\|^2_F}\\
\notag&=\min_{X\in\mathcal{N}(\mathcal{A})\backslash\{0\}}\frac{\|X\|^2_*}{\|X\|^2_F}\\
\notag&=\Delta(\mathcal{A})\geq\Delta.
\end{align}
The proof is complete.
\end{proof}

\section{The null space property for LRMR}
\label{sec3}

For recovering $X$, a prominent model is solving a constrained nuclear norm minimization problem
\begin{align}\label{eq.17}
\min_{\hat{X}\in\mathbb{R}^{m\times n}}\|\hat{X}\|_*~\mbox{subject~to}~\|\mathcal{B}(\hat{X})-\tilde{y}\|_2\leq \epsilon,
\end{align}
where $\epsilon=\sqrt{\theta(M+2\sqrt{M\log M})}$ stands for the noise level, and $\|\mathcal{B}(\hat{X})-\tilde{y}\|_2\leq \epsilon$ holds with high probability, for more details, see Lemma \ref{le.7}. As one of the crucial tool for the analysis of LRMR, the Frobenius-robust rank null space property (FRRNSP) of a linear measurement map $\mathcal{A}$ attracts specific interest.
\begin{definition} (FRRNSP \cite{Kabanava et al})\label{de.1}
The linear measurement map $\mathcal{A}:\mathbb{R}^{m\times n}\to\mathbb{R}^M$ is said to satisfy the Frobenius-robust rank null space property of order $r$ with constants $0<\rho<1$ and $\tau>0$ if for any $X\in\mathbb{R}^{m\times n}$, the singular values of $X$ fulfill
\begin{align}
\notag \|X_{[r]}\|_F\leq\frac{\rho}{\sqrt{r}}\|X_{[r]^c}\|_*+\tau\|\mathcal{A}(X)\|_2.
\end{align}
\end{definition}
Here, the singular value decomposition (SVD) of $X$ is $\sum_{i=1}^{n_0}\sigma_i(X)u_iv_i^{\top}$ with $n_0=\min\{m,n\}$, where $\sigma_i(X)$ is the $i$th largest singular value of $X$, and $u_i$ and $v_i$ are respectively the left and right singular value vectors of $X$. In this situation, write $X=X_{[r]}+X_{[r]^c}$, where $X_{[r]}$ is the best $r$-rank approximation of $X$, i.e., $X_{[r]}=\sum_{i=1}^r\sigma_i(X)u_iv_i^{\top}$. Combining with Definition \ref{de.1} and (\ref{eq.16}), we obtain the FRRNSP of the linear measurement map $\mathcal{B}$ given by the following lemma.
\begin{lemma}\label{le.1}
Set $\delta_1=\delta/(1-\delta)$. Under the assumptions of Definition \ref{de.1} and $\delta<1/2$, the linear measurement map $\mathcal{B}$ obeys the FRRNSP of order $r$, namely, for all $X\in\mathbb{R}^{m\times n}$,
\begin{align}
\notag \sqrt{1-\delta_1}\|X_{[r]}\|_F\leq\frac{\rho\sqrt{1-\delta_1}}{\sqrt{r}}\|X_{[r]^c}\|_*+\tau\|\mathcal{B}(X)\|_2,
\end{align}
holds for the singular values of $X$.
\end{lemma}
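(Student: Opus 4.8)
The plan is to derive the FRRNSP of $\mathcal{B}$ directly from the FRRNSP of $\mathcal{A}$ (Definition \ref{de.1}) by substituting the two-sided comparison between $\|\mathcal{A}(X)\|_2$ and $\|\mathcal{B}(X)\|_2$ recorded in (\ref{eq.16}). Starting from the hypothesis that $\mathcal{A}$ satisfies the FRRNSP of order $r$ with constants $\rho$ and $\tau$, we have, for every $X\in\mathbb{R}^{m\times n}$,
\begin{align}
\notag \|X_{[r]}\|_F\leq\frac{\rho}{\sqrt{r}}\|X_{[r]^c}\|_*+\tau\|\mathcal{A}(X)\|_2.
\end{align}
The key step is to control the term $\tau\|\mathcal{A}(X)\|_2$ from above by a multiple of $\|\mathcal{B}(X)\|_2$. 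From the left-hand inequality in (\ref{eq.16}), namely $(1-\delta_1)\|\mathcal{A}(X)\|_2^2\leq\|\mathcal{B}(X)\|_2^2$, we get $\sqrt{1-\delta_1}\,\|\mathcal{A}(X)\|_2\leq\|\mathcal{B}(X)\|_2$; here $1-\delta_1>0$ is guaranteed by $\delta<1/2$, which forces $\delta_1=\delta/(1-\delta)<1$.

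The second step is bookkeeping: multiply the displayed FRRNSP inequality for $\mathcal{A}$ through by $\sqrt{1-\delta_1}$ to obtain
\begin{align}
\notag \sqrt{1-\delta_1}\,\|X_{[r]}\|_F\leq\frac{\rho\sqrt{1-\delta_1}}{\sqrt{r}}\|X_{[r]^c}\|_*+\tau\sqrt{1-\delta_1}\,\|\mathcal{A}(X)\|_2,
\end{align}
and then replace the last term using $\sqrt{1-\delta_1}\,\|\mathcal{A}(X)\|_2\leq\|\mathcal{B}(X)\|_2$. This yields exactly the claimed inequality
\begin{align}
\notag \sqrt{1-\delta_1}\,\|X_{[r]}\|_F\leq\frac{\rho\sqrt{1-\delta_1}}{\sqrt{r}}\|X_{[r]^c}\|_*+\tau\|\mathcal{B}(X)\|_2
\end{align}
for all $X$. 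Since the inequality holds for every $X\in\mathbb{R}^{m\times n}$, the map $\mathcal{B}$ satisfies the FRRNSP of order $r$ (with effective constants $\rho$ and $\tau/\sqrt{1-\delta_1}$ after renormalizing, though the statement is phrased in the un-normalized form above).

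There is essentially no obstacle here; the lemma is a direct corollary of (\ref{eq.16}) and Definition \ref{de.1}. The only point requiring a word of care is checking the sign/positivity conditions — that $\delta<1/2$ makes $1-\delta_1$ strictly positive so that $\sqrt{1-\delta_1}$ is real and the division in the renormalized constant is legitimate — and noting that only the lower RIP-type bound in (\ref{eq.16}) is used, the upper bound playing no role. I would also remark that, as in the RIP and SSP analyses above, the constants degrade only by the harmless factor $\sqrt{1-\delta_1}=\sqrt{1-\delta/(1-\delta)}$, so the FRRNSP of $\mathcal{B}$ is, for small $\delta$, indistinguishable from that of $\mathcal{A}$.
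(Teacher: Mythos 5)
Your proof is correct and takes essentially the same route as the paper, which states Lemma \ref{le.1} as an immediate consequence of Definition \ref{de.1} together with the lower bound in (\ref{eq.16}), i.e. $\sqrt{1-\delta_1}\,\|\mathcal{A}(X)\|_2\leq\|\mathcal{B}(X)\|_2$, exactly as you do. The only point worth recording is the one you use implicitly: although (\ref{eq.16}) is derived in the proof of Theorem \ref{th.1} for matrices of rank at most $r$, the steps (\ref{eq.14})--(\ref{eq.15}) never invoke the rank restriction, so the bound indeed holds for all $X\in\mathbb{R}^{m\times n}$ as the lemma requires.
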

Based on the above notion and lemma, we will establish an FRRNSP condition for stable and robust recovery of low-rank matrix via the nuclear norm minimization and discuss the upper bound estimation of reconstruction error.

\begin{theorem}\label{th.2}
Suppose that a linear measurement map $\mathcal{A}:\mathbb{R}^{m\times n}\to\mathbb{R}^M$ satisfies the Frobenius-robust rank null space property of order $r$ with constants $0<\rho<1$ and $\tau>0$. Set $\delta_1=\delta/(1-\delta)$. Assume that $\delta<1/2$. Then, for any $X\in\mathbb{R}^{m\times n}$, a solution $X^*$ of (\ref{eq.17}) with $\tilde{y}=\mathcal{B}(X)+u$ and $\|u\|_2\leq\epsilon$ approximates the matrix $X$ with error
\begin{align}
\label{eq.18}\|X-X^*\|_F\leq C_1\frac{\|X_{[r]^c}\|_*}{\sqrt{r}}+C_2\epsilon,
\end{align}
where $$C_1=\frac{2(1+\rho)^2}{1-\rho}$$
and
$$C_2=\frac{2(3+\rho)\tau}{(1-\rho)\sqrt{1-\delta_1}}.$$
\end{theorem}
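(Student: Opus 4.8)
The plan is to follow the standard null-space-property argument for nuclear-norm minimization, but carried out with the measurement map $\mathcal{B}$ and its FRRNSP constants as supplied by Lemma~\ref{le.1}. Write $\Delta = X^* - X$, where $X^*$ is a minimizer of \eqref{eq.17}. Since both $X^*$ and $X$ are feasible for \eqref{eq.17} (the latter because $\tilde{y}=\mathcal{B}(X)+u$ with $\|u\|_2\le\epsilon$), we immediately get the two basic facts: $\|X^*\|_*\le\|X\|_*$ (optimality) and $\|\mathcal{B}(\Delta)\|_2 = \|\mathcal{B}(X^*)-\tilde y + \tilde y - \mathcal{B}(X)\|_2 \le \|\mathcal{B}(X^*)-\tilde y\|_2 + \|u\|_2 \le 2\epsilon$ (feasibility plus triangle inequality). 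These are the only two pieces of information we extract from the optimization problem.

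First I would turn the optimality inequality into a cone condition on the singular values of $\Delta$. Using the decomposition $X = X_{[r]} + X_{[r]^c}$ and the standard trace-norm inequalities (for instance $\|X^* \|_* \ge \|X_{[r]} + \Delta_{[r]^c}^{X}\|_* - \dots$, or more cleanly the decomposition argument of Recht--Fazel--Parrilo / Kabanava et al.), one obtains a bound of the shape
\begin{align}
\notag \|\Delta_{[r]^c}\|_* \le \|\Delta_{[r]}\|_* + 2\|X_{[r]^c}\|_*,
\end{align}
where here the subscript $[r]$ refers to the part of $\Delta$ living on the top-$r$ singular subspaces associated to $X$ (this is the matrix analogue of the usual $\ell_1$ splitting). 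Then I would apply Lemma~\ref{le.1} to $\Delta$: this gives $\sqrt{1-\delta_1}\,\|\Delta_{[r]}\|_F \le \frac{\rho\sqrt{1-\delta_1}}{\sqrt r}\|\Delta_{[r]^c}\|_* + \tau\|\mathcal{B}(\Delta)\|_2$. Combining the cone condition with $\|\Delta_{[r]}\|_* \le \sqrt r\,\|\Delta_{[r]}\|_F$ and the feasibility bound $\|\mathcal{B}(\Delta)\|_2\le 2\epsilon$, one can solve for $\|\Delta_{[r]}\|_F$ and, after absorbing the $\rho\|\Delta_{[r]}\|_F$ term to the left (this is where $\rho<1$ is needed), get
\begin{align}
\notag \|\Delta_{[r]}\|_F \le \frac{\rho}{(1-\rho)}\cdot\frac{2\|X_{[r]^c}\|_*}{\sqrt r} + \frac{\tau}{(1-\rho)\sqrt{1-\delta_1}}\cdot 2\epsilon,
\end{align}
or something of this form up to the exact constants.

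Next I would control the full Frobenius error $\|\Delta\|_F$. Split $\|\Delta\|_F \le \|\Delta_{[r]}\|_F + \|\Delta_{[r]^c}\|_F$, bound $\|\Delta_{[r]^c}\|_F \le \frac{1}{\sqrt r}\|\Delta_{[r]^c}\|_*$ (since $\Delta_{[r]^c}$ has the "flat spectrum" property that its Frobenius norm is at most $1/\sqrt r$ times its nuclear norm when it is supported away from the top $r$ singular values — the matrix Stechkin-type bound), then feed in the cone condition to replace $\|\Delta_{[r]^c}\|_*$ by $\|\Delta_{[r]}\|_* + 2\|X_{[r]^c}\|_* \le \sqrt r\|\Delta_{[r]}\|_F + 2\|X_{[r]^c}\|_*$, and finally substitute the bound on $\|\Delta_{[r]}\|_F$ from the previous step. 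Collecting the two resulting contributions — one proportional to $\|X_{[r]^c}\|_*/\sqrt r$ and one proportional to $\epsilon$ — yields \eqref{eq.18}; tracking the numerical coefficients through these substitutions is exactly what produces $C_1 = 2(1+\rho)^2/(1-\rho)$ and $C_2 = 2(3+\rho)\tau/((1-\rho)\sqrt{1-\delta_1})$.

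I expect the main obstacle to be bookkeeping rather than conceptual: keeping the two orthogonal pieces of $\Delta$ (the "head" on the singular subspaces of $X$ and the "tail") correctly separated while juggling the nuclear and Frobenius norms, and in particular getting the constants to come out precisely as stated. The one genuinely delicate point is establishing the cone/splitting inequality for matrices — the sub-additivity of the nuclear norm under the appropriate orthogonal decomposition (the fact that if $U$ and $V$ have row/column spaces orthogonal to those of $W$ then $\|W+U+V\|_* \ge \|W\|_* + \|U\|_* - \|V\|_*$ type estimates) — but this is precisely the content already packaged in the FRRNSP framework of Kabanava et al.\ cited for Definition~\ref{de.1}, so I would invoke it rather than re-derive it. The role of $\delta<1/2$ is only to ensure $\delta_1<1$ so that $\sqrt{1-\delta_1}$ in Lemma~\ref{le.1} is a positive real number and appears in the denominator of $C_2$.
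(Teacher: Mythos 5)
Your overall skeleton is the same as the paper's: feasibility of both $X$ and $X^*$ gives $\|\mathcal{B}(X-X^*)\|_2\le 2\epsilon$, optimality gives a cone condition on the error, Lemma \ref{le.1} is applied to the error, and a Stechkin-type bound converts nuclear-norm control into Frobenius-norm control. However, there is one concrete mismatch in the step you yourself flag as delicate. You define $\Delta_{[r]}$ as ``the part of $\Delta$ living on the top-$r$ singular subspaces associated to $X$.'' The FRRNSP of Definition \ref{de.1}, and hence Lemma \ref{le.1}, is stated for $\Delta_{[r]}$ being the best rank-$r$ approximation taken from the SVD of $\Delta$ \emph{itself}; it says nothing about decompositions of $\Delta$ relative to the singular subspaces of another matrix. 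So the inequality $\sqrt{1-\delta_1}\,\|\Delta_{[r]}\|_F\le \frac{\rho\sqrt{1-\delta_1}}{\sqrt r}\|\Delta_{[r]^c}\|_*+\tau\|\mathcal{B}(\Delta)\|_2$ cannot be applied to the subspace-based split you describe. The paper resolves this exactly where you punt: it works throughout with $\Delta$'s own SVD and obtains the cone condition
\begin{align}
\notag \|(X-X^*)_{[r]^c}\|_*\le \sqrt r\,\|(X-X^*)_{[r]}\|_F+2\|X_{[r]^c}\|_*
\end{align}
from the singular-value comparison $\|X-Y\|_*\ge\sum_i|\sigma_i(X)-\sigma_i(Y)|$ (Lemma \ref{le.4}) together with $\|X^*\|_*\le\|X\|_*$. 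That is the one genuinely matrix-specific ingredient of the whole proof, and it is the ingredient your write-up omits; citing Kabanava et al.\ is legitimate, but the citation changes the decomposition you are using, so your subsequent steps need to be restated for $\Delta$'s own best-rank-$r$ split.

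Two smaller points of bookkeeping. First, your inequality $\|\Delta_{[r]^c}\|_F\le\frac{1}{\sqrt r}\|\Delta_{[r]^c}\|_*$ is false in general (take $\Delta_{[r]^c}$ of rank one); the correct Stechkin bound (Lemma \ref{le.3} with $p=2$) is $\|\Delta_{[r]^c}\|_F\le\frac{1}{\sqrt r}\|\Delta\|_*$, which is what the paper uses, and this is why the paper first proves a nuclear-norm error bound (Lemma \ref{le.5}) and only then passes to the Frobenius norm. Second, your intermediate bound on $\|\Delta_{[r]}\|_F$ with leading coefficient $\rho/(1-\rho)$ does not by itself reproduce $C_1=2(1+\rho)^2/(1-\rho)$ and $C_2=2(3+\rho)\tau/((1-\rho)\sqrt{1-\delta_1})$; the stated constants come out only if you follow the order of operations in Lemma \ref{le.5} and the proof of the theorem (bound $\|(X-X^*)_{[r]^c}\|_*$, then $\|X-X^*\|_*$, then $\|X-X^*\|_F$). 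With these repairs your argument coincides with the paper's.
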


\begin{remark}
The theorem gives a sufficient condition to ensure the stable and robust reconstruction of the low-rank matrices.
\end{remark}

\begin{remark}
The inequality (\ref{eq.18}) in Theorem \ref{th.2} provides an
upper bound estimation on the reconstruction of the nuclear norm minimization. Especially, this estimation evidences that reconstruction precision of the nuclear norm minimization
can be controlled by the noise level and the best $r$-rank approximation error. Furthermore, the estimation (\ref{eq.18}) shows that the reconstruction accuracy of the method (\ref{eq.17}) can be bounded by the degree of rank of the matrix. In this sense, Theorem \ref{th.2} demonstrates that under certain conditions, an $r$-rank matrix can be robustly reconstructed by the method (\ref{eq.17}).
\end{remark}

\begin{remark}
When no noise is introduced, i.e., $u=0$ and $\epsilon=0$, it will result in the exact recovery when matrices are $r$-rank.
\end{remark}

\begin{remark}
By Lemma \ref{le.7}, we know that $u$ is Gaussian noise, so it is usually bounded by $l_2$-norm. However, when $u$ is non-Gaussian noise, for example, Gaussian mixture noise, it is more appropriate to exploit $l_p$-norm to bound that noise, see \cite{Wen et al 2017}. Then the real matrix could be robustly recovered by
\begin{align}
\label{eq.30}\min_{\hat{X}\in\mathbb{R}^{m\times n}}\|\hat{X}\|_*~\mbox{subject~to}~\|\mathcal{B}(\hat{X})-\tilde{y}\|_p\leq \epsilon,
\end{align}
where $p\geq1$, $\epsilon$ denotes the noise level which varies according to the range of $p$, and $\|\mathcal{B}(X)-\tilde{y}\|_p\leq \epsilon$ holds with high probability, for more details, see Lemma \ref{le.8}. In the following, we only consider the case of $1\leq p<2$ because of another (i.e. $p\geq2$) situation is similar. In this case, assuming the conditions of Lemma \ref{le.1} (just replace $\|\mathcal{B}(X)\|_2$ by $\|\mathcal{B}(X)\|_p$), the linear map $\mathcal{B}$ satisfies the FRRNSP of order $r$, viz,
\begin{align}
\notag \|X_{[r]}\|_F\leq\frac{\rho}{\sqrt{r}}\|X_{[r]^c}\|_*+\frac{\tau}{M^{1/2-1/p}\sqrt{1-\delta_1}}\|\mathcal{B}(X)\|_2.
\end{align}
Under the assumptions of Theorem \ref{th.2}, the solution $X^*$ of (\ref{eq.30}) satisfies
\begin{align}
\notag&\|X-X^*\|_F\leq \frac{2(1+\rho)^2}{1-\rho}\frac{\|X_{[r]^c}\|_*}{\sqrt{r}}
+\frac{2\tau(3+\rho)}{(1-\rho)M^{1/2-1/p}\sqrt{1-\delta_1}}\epsilon,\\
\notag&\|X-X^*\|_p\leq\frac{2(1+\rho)^2}{(1-\rho)r^{1-1/p}}\|X_{[r]^c}\|_*
+\frac{2\tau(3+\rho)r^{1/p-1/2}}{(1-\rho)M^{1/2-1/p}\sqrt{1-\delta_1}}\epsilon,
\end{align}
where $\epsilon=M^{1/p}\sqrt{\theta'(1+2\sqrt{M^{-1}\log M})}$ with $\theta'=[(1-\xi)+\kappa\xi]\sigma^2+mn[(1-\eta)+\gamma\eta]\sigma^2_0/M$.
\end{remark}

In the following, we present the stable rank null space property (SRNSP) of a linear measurement map weaker than the Frobenius-robust rank null space property, see Definition $4.17$ in \cite{Foucart and Rauhut} for the analogue in the sparse signal reconstruction situation.
\begin{definition} (SRNSP)\label{de.2}
We say that the linear measurement map $\mathcal{A}:\mathbb{R}^{m\times n}\to\mathbb{R}^M$ satisfies the stable rank null space property of order $r$ with constants $0<\rho<1$ and $\tau>0$ if for any $X\in\mathbb{R}^{m\times n}$, the singular values of $X$ fulfill
\begin{align}
\notag \|X_{[r]}\|_*\leq\rho\|X_{[r]^c}\|_*+\tau\|\mathcal{A}(X)\|_2.
\end{align}
\end{definition}
Similar to Lemma \ref{le.1}, we derive the following result on the SRNSP of the linear measurement map $\mathcal{B}$.
\begin{lemma}\label{le.2}
Set $\delta_1=\delta/(1-\delta)$. Assume that the conditions of Definition \ref{de.2} and $\delta<1/2$. Then, the linear measurement map $\mathcal{B}$ satisfies the SRNSP of order $r$, viz., for all $X\in\mathbb{R}^{m\times n}$,
\begin{align}
\notag \sqrt{1-\delta_1}\|X_{[r]}\|_*\leq\rho\sqrt{1-\delta_1}\|X_{[r]^c}\|_*+\tau\|\mathcal{B}(X)\|_2,
\end{align}
holds for the singular values of $X$.
\end{lemma}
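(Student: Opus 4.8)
The plan is to mimic, essentially verbatim, the derivation of Lemma \ref{le.1}, replacing the Frobenius norm of the truncated part with the nuclear norm and the FRRNSP inequality with the SRNSP inequality of Definition \ref{de.2}. The only fact about $\mathcal{B}$ that is actually used is the comparison of measurement norms $\|\mathcal{B}(X)\|_2 \le (1+\delta_1)\|\mathcal{A}(X)\|_2$ (equivalently $\|\mathcal{A}(X)\|_2 \le \|\mathcal{B}(X)\|_2/\sqrt{1-\delta_1}$), which is exactly the content of (\ref{eq.16}) and holds under the hypothesis $\delta<1/2$ by Theorem \ref{th.1} and the estimate (\ref{eq.13}).

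The steps, in order, would be: First, invoke Definition \ref{de.2} for the given map $\mathcal{A}$ to write, for an arbitrary $X\in\mathbb{R}^{m\times n}$,
\begin{align}
\notag \|X_{[r]}\|_* \le \rho\,\|X_{[r]^c}\|_* + \tau\,\|\mathcal{A}(X)\|_2 .
\end{align}
Second, multiply both sides by $\sqrt{1-\delta_1}$ (a positive number since $\delta<1/2$ forces $0<\delta_1<1$):
\begin{align}
\notag \sqrt{1-\delta_1}\,\|X_{[r]}\|_* \le \rho\sqrt{1-\delta_1}\,\|X_{[r]^c}\|_* + \tau\sqrt{1-\delta_1}\,\|\mathcal{A}(X)\|_2 .
\end{align}
Third, apply the lower half of (\ref{eq.16}), namely $(1-\delta_1)\|\mathcal{A}(X)\|_2^2 \le \|\mathcal{B}(X)\|_2^2$, which gives $\sqrt{1-\delta_1}\,\|\mathcal{A}(X)\|_2 \le \|\mathcal{B}(X)\|_2$; substituting this into the last term yields
\begin{align}
\notag \sqrt{1-\delta_1}\,\|X_{[r]}\|_* \le \rho\sqrt{1-\delta_1}\,\|X_{[r]^c}\|_* + \tau\,\|\mathcal{B}(X)\|_2 ,
\end{align}
which is precisely the claimed SRNSP inequality for $\mathcal{B}$, with the same constants $\rho$ and $\tau$. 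Since $X$ was arbitrary and the singular value structure is untouched, this completes the argument.

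There is essentially no obstacle here — the proof is a one-line scaling once (\ref{eq.16}) is available, and (\ref{eq.16}) was already established in the proof of Theorem \ref{th.1}. The only points that need care are the purely formal ones: checking that $\delta<1/2$ indeed guarantees $\delta_1 = \delta/(1-\delta) < 1$ so that $\sqrt{1-\delta_1}$ is real and positive (indeed $\delta<1/2 \Rightarrow \delta < 1-\delta \Rightarrow \delta_1<1$), and noting that the SRNSP inequality, unlike the FRRNSP inequality, involves no $1/\sqrt{r}$ factor on the right, so no stray powers of $r$ appear. For brevity one could simply write ``The proof is identical to that of Lemma \ref{le.1}, with $\|X_{[r]}\|_F$ replaced by $\|X_{[r]}\|_*$ and the factor $\rho/\sqrt{r}$ replaced by $\rho$,'' but I would prefer to spell out the three displayed lines above so the reader sees that only (\ref{eq.16}) and the positivity of $1-\delta_1$ are used.
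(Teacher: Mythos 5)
Your proposal is correct and follows exactly the route the paper intends: the paper gives no separate argument for this lemma beyond ``similar to Lemma \ref{le.1},'' which amounts to multiplying the SRNSP inequality for $\mathcal{A}$ by $\sqrt{1-\delta_1}$ and invoking the lower bound $\sqrt{1-\delta_1}\,\|\mathcal{A}(X)\|_2\leq\|\mathcal{B}(X)\|_2$ from (\ref{eq.16}). The only point worth a remark is that (\ref{eq.16}) is derived in Theorem \ref{th.1} for rank-$r$ matrices but, as its derivation never uses the rank restriction, it holds for all $X\in\mathbb{R}^{m\times n}$, which is what both you and the paper implicitly rely on here.
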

With preparation above, we now state the stability and robustness of the method (\ref{eq.17}) under the definition scheme of SRNSP of a linear measurement map.
\begin{theorem}\label{th.3}
We assume that a linear measurement map $\mathcal{A}:\mathbb{R}^{m\times n}\to\mathbb{R}^M$ satisfies the sable rank null space property of order $r$ with constants $0<\rho<1$ and $\tau>0$. Set $\delta_1=\delta/(1-\delta)$ with $\delta<1/2$. Then, for any $X\in\mathbb{R}^{m\times n}$, a solution $X^*$ of (\ref{eq.17}) with $\tilde{y}=\mathcal{B}(X)+u$ and $\|u\|_2\leq\epsilon$ approximates the matrix $X$ with error
\begin{align}
\label{eq.19}\|X-X^*\|_F\leq D_1\frac{\|X_{[r]^c}\|_*}{\sqrt{r}}+D_2\epsilon,
\end{align}
where $$D_1=\frac{2(1+\rho)(\rho\sqrt{r}+1)}{1-\rho}$$
and
$$D_2=\frac{2[(1+\rho)\sqrt{r}+2]\tau}{(1-\rho)\sqrt{r(1-\delta_1)}}.$$
\end{theorem}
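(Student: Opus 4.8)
The plan is to mimic the standard passage from a null-space property to stable-and-robust recovery for nuclear-norm minimization (the low-rank analogue of the corresponding result for $\ell_{1}$-minimization in \cite{Foucart and Rauhut}, and parallel to the proof of Theorem~\ref{th.2}), carried out for the whitened map $\mathcal{B}$. The first input is Lemma~\ref{le.2}: the hypothesis on $\mathcal{A}$ transfers to $\mathcal{B}$ in the form $\|X_{[r]}\|_{*}\le\rho\|X_{[r]^{c}}\|_{*}+\tau'\|\mathcal{B}(X)\|_{2}$ for all $X$, where $\tau':=\tau/\sqrt{1-\delta_{1}}$. Next, set $H:=X-X^{*}$; since $\tilde y=\mathcal{B}(X)+u$ with $\|u\|_{2}\le\epsilon$, the matrix $X$ is feasible for (\ref{eq.17}), so minimality of $X^{*}$ gives $\|X^{*}\|_{*}\le\|X\|_{*}$, and the triangle inequality together with $\|\mathcal{B}(X^{*})-\tilde y\|_{2}\le\epsilon$ gives $\|\mathcal{B}(H)\|_{2}\le 2\epsilon$.

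The second ingredient is a cone inequality coming from optimality. Fix an SVD of $X$ and let $P,Q$ be the orthogonal projections onto the column- and row-spaces of $X_{[r]}$. Decompose $H=H_{1}+H_{2}$ with $H_{1}:=PH+HQ-PHQ$ and $H_{2}:=(I-P)H(I-Q)$; then $\mbox{rank}(H_{1})\le 2r$, $\langle H_{1},H_{2}\rangle=0$ (so $\|H_{1}\|_{F}\le\|H\|_{F}$), and the row/column spaces of $H_{2}$ are orthogonal to those of $X_{[r]}$, whence $\|X_{[r]}+H_{2}\|_{*}=\|X_{[r]}\|_{*}+\|H_{2}\|_{*}$. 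Starting from $\|X\|_{*}\ge\|X^{*}\|_{*}=\|X+H\|_{*}$, regrouping $X+H=(X_{[r]}+H_{2})+(X_{[r]^{c}}+H_{1})$, and using the reverse and ordinary triangle inequalities together with $\|X\|_{*}=\|X_{[r]}\|_{*}+\|X_{[r]^{c}}\|_{*}$, one obtains $\|H_{2}\|_{*}\le\|H_{1}\|_{*}+2\|X_{[r]^{c}}\|_{*}$.

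Now apply the transferred SRNSP to $H$ and use $\|\mathcal{B}(H)\|_{2}\le 2\epsilon$ to get $\|H_{[r]}\|_{*}\le\rho\|H_{[r]^{c}}\|_{*}+2\tau'\epsilon$. The delicate step --- and the one place where there is real work --- is to reconcile the $X_{[r]}$-adapted splitting $H=H_{1}+H_{2}$ from the previous paragraph (the only one for which the nuclear norms add) with the intrinsic best-rank-$r$ splitting $H=H_{[r]}+H_{[r]^{c}}$ in which the SRNSP is phrased. Using the singular-value inequalities $\|H_{[2r]^{c}}\|_{*}\le\|H_{2}\|_{*}$ and $\|H_{1}\|_{*}\le\sqrt{2r}\,\|H_{1}\|_{F}$, the identity $\|H\|_{*}=\|H_{[r]}\|_{*}+\|H_{[r]^{c}}\|_{*}\le\|H_{1}\|_{*}+\|H_{2}\|_{*}$, and the cone inequality, one eliminates $\|H_{1}\|_{*},\|H_{2}\|_{*}$ and solves for the nuclear-norm quantities, reaching $\|H\|_{*}\le\frac{2(1+\rho)}{1-\rho}\|X_{[r]^{c}}\|_{*}+\frac{4\tau'}{1-\rho}\epsilon$, the matrix analogue of the $\ell_{1}$-error bound under a stable null space property.

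Finally, to pass to the Frobenius norm without simply using $\|H\|_{F}\le\|H\|_{*}$ (which would give too large a constant), write $\|H\|_{F}\le\|H_{[r]}\|_{F}+\|H_{[r]^{c}}\|_{F}$, estimate $\|H_{[r]}\|_{F}\le\|H_{[r]}\|_{*}$, and use $\|H_{[r]^{c}}\|_{F}\le r^{-1/2}\sqrt{\|H_{[r]}\|_{*}\,\|H_{[r]^{c}}\|_{*}}$, which follows from $\sigma_{r+1}(H)\le r^{-1}\|H_{[r]}\|_{*}$ and $\|H_{[r]^{c}}\|_{F}^{2}\le\sigma_{r+1}(H)\|H_{[r]^{c}}\|_{*}$. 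Substituting $\|H_{[r]}\|_{*}\le\rho\|H_{[r]^{c}}\|_{*}+2\tau'\epsilon$ and the bound from the previous paragraph, and collecting the factors $\rho,\sqrt r,r^{-1/2}$ with $\tau'=\tau/\sqrt{1-\delta_{1}}$ reinserted, produces exactly (\ref{eq.19}) with $D_{1}=\frac{2(1+\rho)(\rho\sqrt r+1)}{1-\rho}$ and $D_{2}=\frac{2[(1+\rho)\sqrt r+2]\tau}{(1-\rho)\sqrt{r(1-\delta_{1})}}$; for $\epsilon=0$ and $X_{[r]^{c}}=0$ both terms vanish, so $X^{*}=X$. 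I expect the only genuine obstacle to be matching these constants precisely: the rank-$2r$ piece $H_{1}$ must be handled carefully for the $\sqrt r$ and $r^{-1/2}$ factors to cancel as claimed, while the remainder is a mechanical chain of triangle inequalities and Cauchy--Schwarz mirroring Theorem~\ref{th.2}.
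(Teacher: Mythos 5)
There is a genuine gap, and it sits exactly where you flagged the ``delicate step.'' Your cone inequality $\|H_2\|_*\le\|H_1\|_*+2\|X_{[r]^c}\|_*$ is phrased for the subspace decomposition $H_1=PH+HQ-PHQ$, $H_2=(I-P)H(I-Q)$ adapted to $X_{[r]}$, with $\mathrm{rank}(H_1)\le 2r$. That decomposition is the right tool when one has an RIP of order $2r$ (one then bounds $\|H_1\|_F$ by applying the RIP to the low-rank matrix $H_1$), but here the only hypothesis is the \emph{order-$r$ stable rank null space property}, which controls $\|H_{[r]}\|_*$ in terms of $\|H_{[r]^c}\|_*$ --- quantities defined by the SVD of $H$ itself, not by $P,Q$. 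None of the inequalities you list ($\|H_{[2r]^c}\|_*\le\|H_2\|_*$, $\|H_1\|_*\le\sqrt{2r}\|H_1\|_F$, $\|H\|_*\le\|H_1\|_*+\|H_2\|_*$, the cone inequality) bounds $\|H_1\|_*$ or $\|H_1\|_F$ by anything controlled: chaining them only yields $\|H\|_*\le 2\|H_1\|_*+2\|X_{[r]^c}\|_*$ with $\|H_1\|_*$ dangling, and the SRNSP inequality $\|H_{[r]}\|_*\le\rho\|H_{[r]^c}\|_*+2\tau'\epsilon$ cannot be brought to bear on it (it would pair with a splitting at rank $r$, or with an NSP of order $2r$, neither of which you have). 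So the claimed elimination leading to $\|H\|_*\le\frac{2(1+\rho)}{1-\rho}\|X_{[r]^c}\|_*+\frac{4\tau'}{1-\rho}\epsilon$ is not established.

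The fix is to abandon $H_1,H_2$ and derive the cone inequality directly for the SVD splitting $H=H_{[r]}+H_{[r]^c}$, which is what the paper does via Lemma \ref{le.4} (Mirsky's inequality $\|X-Y\|_*\ge\sum_i|\sigma_i(X)-\sigma_i(Y)|$): from $\|X^*\|_*\le\|X\|_*$ one gets $\|H_{[r]^c}\|_*\le\|H_{[r]}\|_*+2\|X_{[r]^c}\|_*$, then the SRNSP of $\mathcal{B}$ absorbs $\rho\|H_{[r]^c}\|_*$ into the left side (using $\rho<1$), giving Lemma \ref{le.6} and hence $\|H\|_*\le\frac{2(1+\rho)}{1-\rho}\|X_{[r]^c}\|_*+\frac{4\tau'}{1-\rho}\epsilon$ with $\tau'=\tau/\sqrt{1-\delta_1}$. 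From there the paper's final step is also simpler than yours: with Lemma \ref{le.3},
\begin{align}
\notag\|H\|_F\le\|H_{[r]}\|_F+\|H_{[r]^c}\|_F\le\|H_{[r]}\|_*+\tfrac{1}{\sqrt r}\|H\|_*\le\Bigl(\rho+\tfrac{1}{\sqrt r}\Bigr)\|H\|_*+2\tau'\epsilon,
\end{align}
and substituting the nuclear-norm bound yields exactly $D_1$ and $D_2$. Your last paragraph's refinement $\|H_{[r]^c}\|_F\le r^{-1/2}\sqrt{\|H_{[r]}\|_*\|H_{[r]^c}\|_*}$ is correct but unnecessary here and would produce different constants; the asserted match with $D_1,D_2$ is not checked. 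In short: the first and last stages of your plan are sound, but the central nuclear-norm bound needs Lemma \ref{le.4}'s singular-value argument rather than the $P,Q$ decomposition.
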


\begin{corollary}\label{co.1}
Under the same assumptions as in Theorem \ref{th.3}, suppose that $u=0$ and $X$ is $r$-rank. Then, $X$ can be exactly reconstructed via the method (\ref{eq.17}).
\end{corollary}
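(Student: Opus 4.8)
The plan is to obtain the corollary as the noiseless, exactly rank-$r$ specialization of Theorem \ref{th.3}. First I would observe that since $u=0$ the hypothesis $\|u\|_2\le\epsilon$ of Theorem \ref{th.3} is satisfied with $\epsilon=0$ (equivalently, $\sigma=\sigma_0=0$ yields $\theta=0$ and hence $\epsilon=0$ in (\ref{eq.17})); with $\epsilon=0$ the constraint in (\ref{eq.17}) reduces to the linear system $\mathcal{B}(\hat X)=\tilde y$, and since $\tilde y=\mathcal{B}(X)+u=\mathcal{B}(X)$ the matrix $X$ itself is feasible, so a minimizer $X^*$ exists. Next, because $X$ has rank at most $r$, its best $r$-rank approximation is $X$ itself, so $X_{[r]^c}=0$ and $\|X_{[r]^c}\|_*=0$. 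Substituting $\|X_{[r]^c}\|_*=0$ and $\epsilon=0$ into the error estimate (\ref{eq.19}) then gives $\|X-X^*\|_F\le D_1\cdot 0+D_2\cdot 0=0$, so $X^*=X$, which is the asserted exact recovery.

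Should a self-contained argument be preferred, I would instead specialize the proof of Theorem \ref{th.3} to the degenerate case $\epsilon=0$, $X_{[r]^c}=0$, starting from Lemma \ref{le.2}. Setting $E:=X^*-X$, feasibility of both $X$ and $X^*$ (with $\epsilon=0$) forces $\mathcal{B}(E)=0$, so $E\in\mathcal{N}(\mathcal{B})$ and $\|\mathcal{B}(E)\|_2=0$; Lemma \ref{le.2} then yields $\|E_{[r]}\|_*\le\rho\|E_{[r]^c}\|_*$ after cancelling the common positive factor $\sqrt{1-\delta_1}$ (positive because $\delta<1/2$). On the other side, optimality of $X^*$ and feasibility of $X$ give $\|X+E\|_*\le\|X\|_*$; combined with the standard lower bound for the nuclear norm at the rank-$r$ point $X$, this should produce an inequality of the shape $\|E_{[r]^c}\|_*\le\|E_{[r]}\|_*$. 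Chaining the two estimates would give $\|E_{[r]^c}\|_*\le\rho\|E_{[r]^c}\|_*$ with $\rho<1$, forcing $E_{[r]^c}=0$, then $\|E_{[r]}\|_*=0$, hence $E=0$ and $X^*=X$.

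I do not anticipate a real obstacle: the corollary falls out of Theorem \ref{th.3} directly. The only points meriting a sentence of care are (i) the passage to $\epsilon=0$ when $u=0$, which is what makes $X$ feasible and keeps the error bound non-vacuous, and (ii), in the self-contained route, the interplay between the singular-value splitting $E=E_{[r]}+E_{[r]^c}$ used in the stable rank null space property and the splitting of $E$ relative to the row and column spaces of $X$ used to lower-bound $\|X+E\|_*$ — i.e.\ the order-$r$ versus order-$2r$ rank bookkeeping that also underlies the analogous sparse-vector argument referenced after Definition \ref{de.2}, and which is already taken care of inside the proof of Theorem \ref{th.3}.
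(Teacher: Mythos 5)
Your first argument is exactly the paper's (implicit) proof: the corollary is stated as an immediate consequence of Theorem \ref{th.3}, obtained by taking $\epsilon=0$ (legitimate since $\|u\|_2=0$) and noting that $X_{[r]^c}=0$ for an $r$-rank $X$, so both terms in (\ref{eq.19}) vanish and $X^*=X$. The proposal is correct and takes essentially the same approach; the self-contained null-space argument you sketch as a backup is a standard alternative but is not needed.
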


\section{Measurement map with independent entries and four finite moments}
\label{sec4}

In this section, we will determine how many measurement matrices with independent elements and four finite moments are needed for the FRRNSP condition to be fulfilled with high probability.
\begin{theorem}\label{th.4}
Set $n_0=\min\{m,n\}$. Let $\mathcal{A}:\mathbb{R}^{m\times n}\to\mathbb{R}^M$, and $\mathcal{A}(X)$ is defined by (\ref{eq.2}), where the $A^{(i)}$ are independent copies of a random matrix $\Phi=(X_{ij})_{i,j}$ with independent mean zero elements following
$$\mbox{Var}(X_{ij})=(\sigma^2+\sigma^2_0)/(\sigma^2+mn\sigma^2_0/M)$$
 and
 $$\mathbb{E}X^4_{ij}\leq(\sigma^2+\sigma^2_0)^2/(\sigma^2+mn\sigma^2_0/M)^2C_4$$
  for all $i,j$ and some positive constant $C_4$. In addition, assume that $A^{(1)},A^{(2)},\cdots,A^{(M)}$ are mutually orthogonal, and the columns of $A^{(i)}$ are mutually orthogonal and the lengths of its columns equal to $1/\sqrt{n}$.

\noindent  Then, for given $1\leq r\leq n_0$ and $\delta_1=\delta/(1-\delta)$ with $\delta<1/2$, there exists $c_1,c_2$ relying on $C_4$ that are positive constants, such that $\mathcal{A}$ satisfies the Frobenius-robust rank null space property with constants $0<\rho<1$ and $\tau>0$ with probability at least $1-e^{-c_2M}$ whenever $$M\geq c_1r(m+n).$$
\end{theorem}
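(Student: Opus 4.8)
\textbf{Proof proposal for Theorem \ref{th.4}.} The plan is to route through the restricted isometry property, separating a deterministic implication from a probabilistic estimate. First I would show that if $\mathcal{A}$ satisfies the RIP (\ref{eq.10}) of order $2r$ with constants $\mu_{2r}\le\nu_{2r}$ whose ratio $\nu_{2r}/\mu_{2r}$ is sufficiently close to $1$ (equivalently, with a small enough restricted isometry constant), then $\mathcal{A}$ satisfies the Frobenius-robust rank null space property of Definition \ref{de.1} of order $r$ with an explicit pair $(\rho,\tau)$, $0<\rho<1$, depending only on those RIP constants. This implication is by now standard for the matrix setting: writing $X=X_{[r]}+X_{[r]^c}$, decomposing $X_{[r]^c}$ into successive rank-$r$ blocks $X^{(1)},X^{(2)},\dots$ ordered by decreasing singular values, and running the usual cone/shifting inequalities (the nuclear-norm analogue of the argument in \cite{Foucart and Rauhut} and \cite{Kabanava et al}) gives $\|X_{[r]}\|_F\le(\rho/\sqrt r)\|X_{[r]^c}\|_*+\tau\|\mathcal{A}(X)\|_2$; I would record this as a lemma and cite it rather than reprove it. It then remains to show that the prescribed random ensemble satisfies such an RIP with probability at least $1-e^{-c_2M}$ once $M\ge c_1 r(m+n)$.

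The bulk of the work is this probabilistic RIP bound. For a fixed $X$ with $\|X\|_F=1$ one has $\|\mathcal{A}(X)\|_2^2=\sum_{i=1}^M\langle A^{(i)},X\rangle^2$, a sum of $M$ i.i.d. nonnegative terms, each $\langle A^{(i)},X\rangle=\sum_{j,k}(A^{(i)})_{jk}X_{jk}$ being a weighted sum of independent mean-zero variables; the prescribed entrywise variance makes $\mathbb{E}\|\mathcal{A}(X)\|_2^2$ a fixed multiple of $\|X\|_F^2$, which I absorb into $\mu_{2r},\nu_{2r}$, while the orthogonality normalizations on the $A^{(i)}$ and on their columns fix the scale of the problem. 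I would then establish a pointwise deviation bound of the form $\mathbb{P}(|\|\mathcal{A}(X)\|_2^2-\|X\|_F^2|>t)\le 2e^{-cM\min\{t,t^2\}}$ for $t$ in the relevant range, and upgrade it to a uniform bound over all matrices of rank at most $2r$ by a covering argument: the Frobenius-unit sphere of rank-$2r$ matrices carries an $\varepsilon$-net of cardinality at most $(C/\varepsilon)^{2r(m+n+1)}$ (as in the low-rank RIP argument of \cite{Recht et al}), a union bound over the net costs a factor that is absorbed by the exponent precisely when $M\ge c_1 r(m+n)$ with $c_1$ large in terms of $C_4$, and a short approximation step transfers the estimate from the net to the full rank-$2r$ sphere. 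Combining the two steps yields the FRRNSP with the stated $\rho,\tau$ and probability $1-e^{-c_2M}$.

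The delicate point, and the reason the four-finite-moments hypothesis (hence the dependence of $c_1,c_2$ on $C_4$) is needed, is the pointwise concentration for heavy-tailed entries, where a sub-Gaussian tail is unavailable. I would handle it by truncation: split each $(A^{(i)})_{jk}$ at a threshold $T$ into a bounded part and a tail part, apply a Bernstein/Bennett inequality to the bounded part to obtain an exponential rate linear in $M$, and control the tail part using the bound on $\mathbb{E}(A^{(i)})_{jk}^4$ (via Markov at fourth order, together with a symmetrization/contraction step if needed), choosing $T$ so that both the truncation bias and the discarded tail mass are negligible at scale $e^{-c_2M}$. An alternative that also produces exponential probability from only low-moment assumptions is Mendelson's small-ball method applied directly to $\inf\{\|\mathcal{A}(X)\|_2:\operatorname{rank}(X)\le 2r,\ \|X\|_F=1\}$, with the complexity term bounded by the $O(\sqrt{r(m+n)})$ Gaussian width of that set. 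I expect the main obstacle to be exactly reconciling, in the truncation route, the exponential-in-$M$ Bernstein estimate for the bounded part with the requirement that the tail contribution still beat the $(C/\varepsilon)^{2r(m+n)}$ net cardinality; by comparison the covering step and the RIP-to-FRRNSP implication are routine.
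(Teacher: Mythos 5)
Your architecture does not match the paper's, and your primary route has a real gap. The paper's written proof is short and almost entirely a reduction: it uses the orthogonality hypotheses to get $AA^{\top}=I_M$, hence $\Sigma_1$ is a scalar multiple of $I_M$, checks that the whitened entries $B_{ij}$ have mean zero, unit variance and fourth moment at most $C_4$, and then cites Theorem 1.1 of \cite{Kabanava et al} for the entire probabilistic content. That cited theorem is proved by Mendelson's small-ball method applied \emph{directly} to the set of matrices violating the null space property, i.e.\ by lower-bounding $\inf\{\|\mathcal{A}(X)\|_2\}$ over $\{X:\|X_{[r]}\|_F>(\rho/\sqrt r)\|X_{[r]^c}\|_*,\ \|X\|_F=1\}$; it never establishes an RIP, precisely because of the obstruction below.

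The gap: your main route needs a two-sided RIP of order $2r$ with ratio $\nu_{2r}/\mu_{2r}$ near $1$, holding with probability $1-e^{-c_2M}$. Under a fourth-moment-only assumption the \emph{upper} inequality cannot be obtained with exponentially small failure probability: already for $X=e_je_k^{\top}$ one has $\|\mathcal{A}(X)\|_2^2=\sum_i\bigl(A^{(i)}_{jk}\bigr)^2$, and for an entry law with exactly four finite moments a single large entry pushes this above any fixed $\nu$ with probability decaying only polynomially in $M$, not like $e^{-c_2M}$. Truncation plus Bernstein controls the truncated part, but the discarded tail events occur with polynomial probability and do affect the supremum, so the union bound over the $(C/\varepsilon)^{2r(m+n)}$ net never reaches the claimed scale. (The paper's additional exact-orthogonality hypotheses would force $\|A\|=1$ and trivialize the upper bound, but they are in tension with genuine entrywise independence, and neither your argument nor the cited result relies on them for this step.) Your fallback, the small-ball method, is the right tool, but as you state it --- applied to the rank-$2r$ Frobenius sphere --- it yields only the lower constant $\mu_{2r}$, which alone does not feed the RIP$\Rightarrow$FRRNSP implication, since that implication uses both sides. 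To close the argument you should drop the RIP entirely and run the small-ball bound on the NSP-violating cone itself, which is exactly what \cite{Kabanava et al} does; you would then only need to add the paper's normalization computation showing the whitened entries satisfy that theorem's hypotheses.
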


\begin{proof}[Proof of the theorem \ref{th.4}] By the assumptions of Theorem \ref{th.4}, we get
\begin{align}
\notag &\mbox{vec}^{\top}(A^{(i)})\mbox{vec}(A^{(i)})=1,~i=1,2,\cdots,M,\\
\notag &\mbox{vec}^{\top}(A^{(i)})\mbox{vec}(A^{(j)})=0,~i\neq j.
\end{align}
Consequently,
\begin{align}
\notag AA^{\top}=\left(\begin{matrix}
\mbox{vec}^{\top}(A^{(1)})\\
\mbox{vec}^{\top}(A^{(2)})\\
\vdots\\
\mbox{vec}^{\top}(A^{(M)})
\end{matrix}\right)\left(\mbox{vec}(A^{(1)}),\mbox{vec}(A^{(2)}),\cdots,\mbox{vec}(A^{(M)})\right)=I_M.
\end{align}
By employing the identity above and the definition of $\Sigma_1$, we get
\begin{align}
\notag B_{i\cdot}&=(\Sigma^{-1/2}_1)_{i\cdot}A\\
\notag&=\left(\frac{\sigma^2+mn\sigma^2_0/M}{\sigma^2+\sigma^2_0}\right)^{1/2}A_{i\cdot},
\end{align}
i.e.,
$$\mbox{vec}^{\top}(B^{(i)})=\left(\frac{\sigma^2+mn\sigma^2_0/M}{\sigma^2+\sigma^2_0}\right)^{1/2}\mbox{vec}^{\top}(A^{(i)}).$$
By applying the conditions of Theorem \ref{th.4}, we get
$$\mathbb{E}(B_{ij})=\mathbb{E}\left[\left(\frac{\sigma^2+mn\sigma^2_0/M}{\sigma^2+\sigma^2_0}\right)^{1/2}A_{ij}\right]=0,$$
$$\mbox{Var}(B_{ij})=\mbox{Var}\left[\left(\frac{\sigma^2+mn\sigma^2_0/M}{\sigma^2+\sigma^2_0}\right)^{1/2}A_{ij}\right]=1,$$
$$\mathbb{E}(B^4_{ij})=\mathbb{E}\left[\left(\frac{\sigma^2+mn\sigma^2_0/M}{\sigma^2+\sigma^2_0}\right)^{1/2}A_{ij}\right]^4\leq C_4.$$
The remainder of the proof follows similarly the proof of Theorem $1.1$ in \cite{Kabanava et al}, which is omitted here for succinctness.
\end{proof}

\section{Numerical Simulations}
\label{sec5}

In this section, we present the optimization inside information of the constrained problem (\ref{eq.17}). The regularization form of the problem (\ref{eq.17}) is
\begin{align}\label{eq.32}
\min_{\hat{X}}\|\hat{X}\|_*+\frac{\lambda}{2}\|B\mbox{vec}(\hat{X})-\tilde{y}\|^2_2,
\end{align}
where $\lambda$ is a regularization parameter, $\hat{X}\in\mathbb{R}^{m\times n}$, $B\in\mathbb{R}^{M\times mn}$, $\tilde{y}\in\mathbb{R}^M$ and $\mbox{vec}(\hat{X})$ stands for the vectorization of $\hat{X}$. Then, we solve the unconstrained problem (\ref{eq.32}) by using the alternating direction method of multipliers (ADMM) \cite{Lu et al arXiv 2018} \cite{Lu et al TPAMI 2018} \cite{Feng et al 2019}. The problem (\ref{eq.32}) can be equivalently rewritten as
\begin{align}\label{eq.33}
\min_{\hat{X}}\|\hat{X}\|_*+\frac{\lambda}{2}\|B\mbox{vec}(U)-\tilde{y}\|^2_2~\mbox{s.t.}~\hat{X}=U.
\end{align}
The associating augmented Lagrangian function is
\begin{align}\label{eq.34}
L(\hat{X},U,W)=\|\hat{X}\|_*+\frac{\lambda}{2}\|B\mbox{vec}(\hat{X})-\tilde{y}\|^2_2
+\left<W,\hat{X}-U\right>+\frac{\rho_1}{2}\|\hat{X}-U\|^2_F.
\end{align}
where $W\in\mathbb{R}^{m\times n}$ indicates the Lagrangian multiplier, and $\rho_1$ is a positive scalar. Then $\hat{X}$ and $W$ can be obtained by minimizing each variable alternately while fixing the other variables. The updated details are summarized in Algorithm 5.1.

\begin{algorithm}[H]
\label{alg.1}\caption{: Solve problem (\ref{eq.17}) by ADMM } 
\begin{algorithmic}[1]
\State Input $A\in\mathbb{R}^{M\times mn}$, $y\in\mathbb{R}^M$, $\sigma,~\sigma_0$.
\State Whitening $B=\Sigma^{-1/2}_1A$, $\tilde{y}=\Sigma^{-1/2}_1y$.
\State Initialize $\hat{X}^0=U^0=W^0$, $\gamma=1.1$, $\lambda_0 = 10^{-6}$, $\lambda_{\max} = 10^{10}$, $\rho_1 = 10^{-6}$, $\varepsilon=10^{-8}$, $j=0$.
\While{not converged} 
¡¡¡¡\State Updated $X^{j+1}$ by $$\hat{x}=(B^{\top}B+\rho_1I)^{-1}\left(B^{\top}\tilde{y}-\rho_1\mbox{vec}(U^j)+\mbox{vec}(W^j)\right);$$
     $\hat{X}^{j+1}\leftarrow\hat{x}$: reshape $\hat{x}$ to the matrix $\hat{X}^{j+1}$ of size $m\times n$.
    \State Update $U^{j+1}$ by
    $$\arg\min_U\rho_1\left(\frac{\lambda}{\rho_1}\|U\|_*+\frac{1}{2}\left\|U-\left(X^{j+1}+\frac{W^j}{\rho_1}\right)\right\|^2_F\right);$$
     \State Update $W^{j+1}$ by
     $$W^{j+1}=W^{j}+\hat{X}^{j+1}-U^{j+1};$$
     \State Update $\lambda_{j+1}$ by $\lambda_{j+1}=\min(\gamma\lambda_{j},\lambda_{\max});$
     \State Check the convergence conditions
      $$\|\hat{X}^{j+1}-X^j\|_{\infty}\leq\varepsilon,\|U^{j+1}-U^j\|_{\infty}\leq\varepsilon,$$
      $$\|B\mbox{vec}(\hat{X}^{j+1})-\tilde{y}\|_{\infty}\leq\varepsilon,\|\hat{X}^{j+1}-U^{j+1}\|_{\infty}\leq\varepsilon.$$

\EndWhile
\end{algorithmic}
\end{algorithm}

In our experiments, the measurement matrix $A\in\mathbb{R}^{M\times mn}$ is generated with its elements being i.i.d., zero-mean, $1/M$-variance Gaussian distribution. Next, the matrix $X\in\mathbb{R}^{m\times n}$ of rank $r$ is generated by $X=XL*XR$, where $XL\in\mathbb{R}^{m\times r}$ and $XL\in\mathbb{R}^{r\times n}$ are with i.i.d. draw from a standard Gaussian distribution. The noise matrix $Z$ and the measurement noise vector $w$ are then respectively generated with its entries being i.i.d., zero-mean, $\sigma^2_0$-variance Gaussian distribution ($\sigma_0=0.05,~0.10,~0.15$) and $\sigma^2$-variance Gaussian distribution ($\sigma=0.01$). We choose $m=n=30$ and $r=0.2m$. With $A$, $X$, $Z$ and $w$, the measurement $y$ is produced by $y=A(\mbox{vec}(X)+\mbox{vec}(Z))+w$. Due to $\theta=\sigma^2+mn\sigma^2_0/M$, $\Sigma_1=\Sigma/\theta$, accordingly we derive $\tilde{y}=\Sigma^{-1/2}_1y$, $B=\Sigma^{-1/2}_1A$ after whitening noise. In order to prevent the occurrence of randomness, we reveal the average results over independent 100 trails in all experiments.

To find the better $\lambda$ which derives the maximal Signal-to-Noise Ratio (SNR, $20\log(\|X\|_F/\|X-\hat{X}\|_F)$), a set of trails have been carried out. In Fig. \ref{fig.4} with $M=750$, the SNR is plotted versus the regularization parameter $\lambda$ for different $\sigma_0$ values, $\sigma_0=0.05,~0.10,~0.15$, and $\lambda$ is varied between $10^{-9}$ and $1$, and the image evidences that the parameter $\lambda\in[10^{-9},10^{-1}]$ is a well selection.

\begin{figure}[h]
\begin{center}
\includegraphics[width=0.45\textwidth]{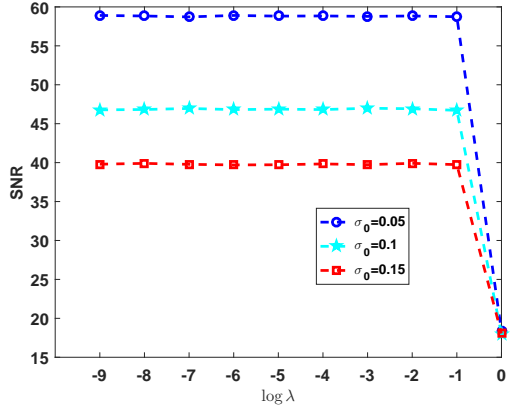}
\caption{Recovery performance of the constrained nuclear norm minimization (\ref{eq.17}) versus $\lambda$.}\label{fig.4}
\end{center}
\vspace*{-14pt}
\end{figure}

In order to verify the justifiability of the model (\ref{eq.17}), two sets of experiments have been conducted. In Fig. \ref{fig.1}(a) with $M=700$, the average relative error ($\|X-\hat{X}\|_F/\|X\|_F$) is plotted versus the rank $r$ for different standard deviation values (i.e., $\sigma_0$), $\sigma_0=0.05,~0.10,~0.15$, and the rank $r$ ranges from $4$ and $8$ and Fig. \ref{fig.1}(b) depicts the SNR versus the number of measurements $M$ for different $\sigma_0$ values, $\sigma_0=0.05,~0.10,~0.15$, and the number of measurements $M$ varies from $720$ to $800$ with $r=6$. It is easy to see that as the rank of the original matrix $X$decreases and the number of measurements increases, the recovery error decreases gradually, and a decreasing standard deviation $\sigma_0$ leads to a better performance.

\begin{figure}[h]
\begin{center}
\subfigure[]{\includegraphics[width=0.40\textwidth]{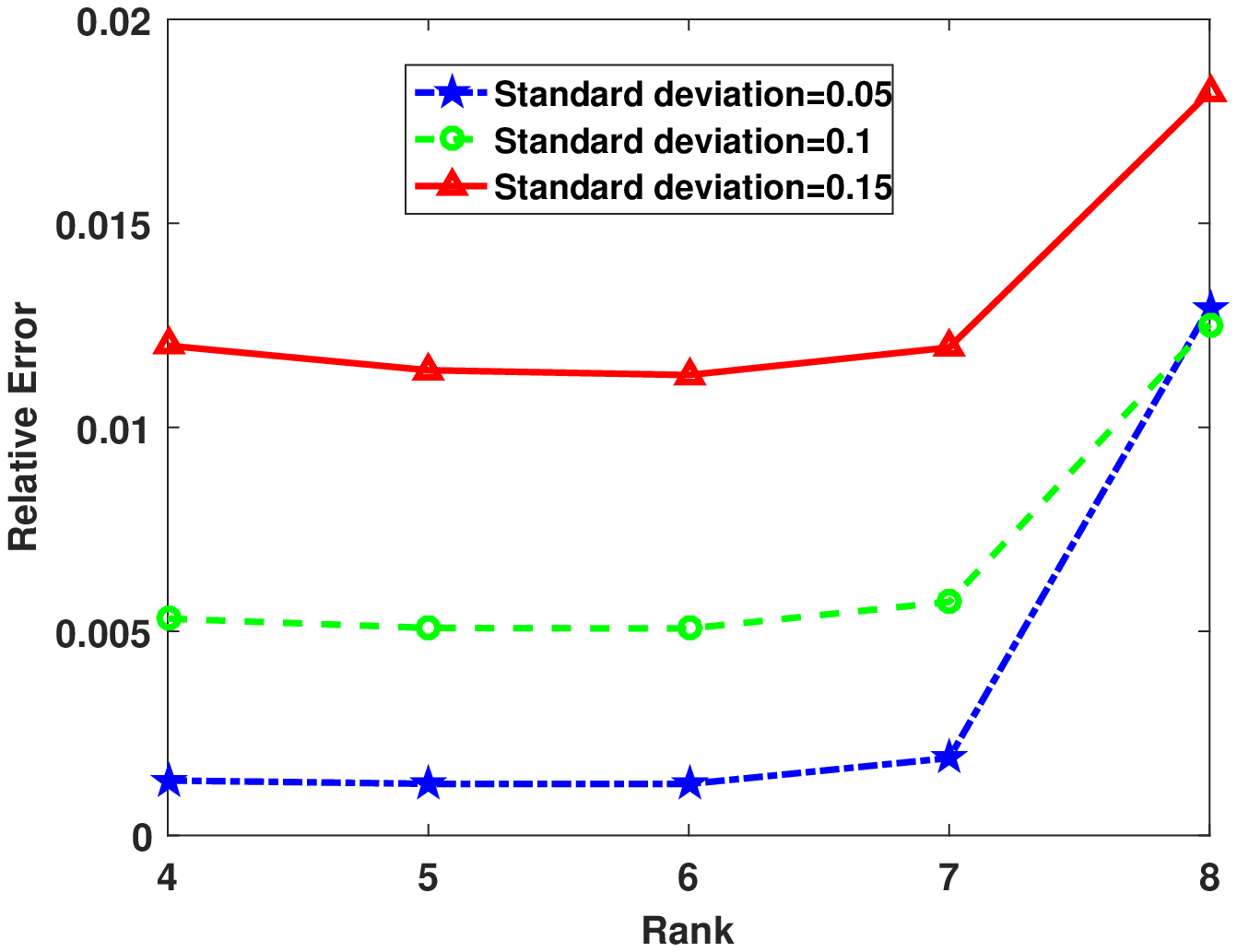}}
\subfigure[]{\includegraphics[width=0.40\textwidth]{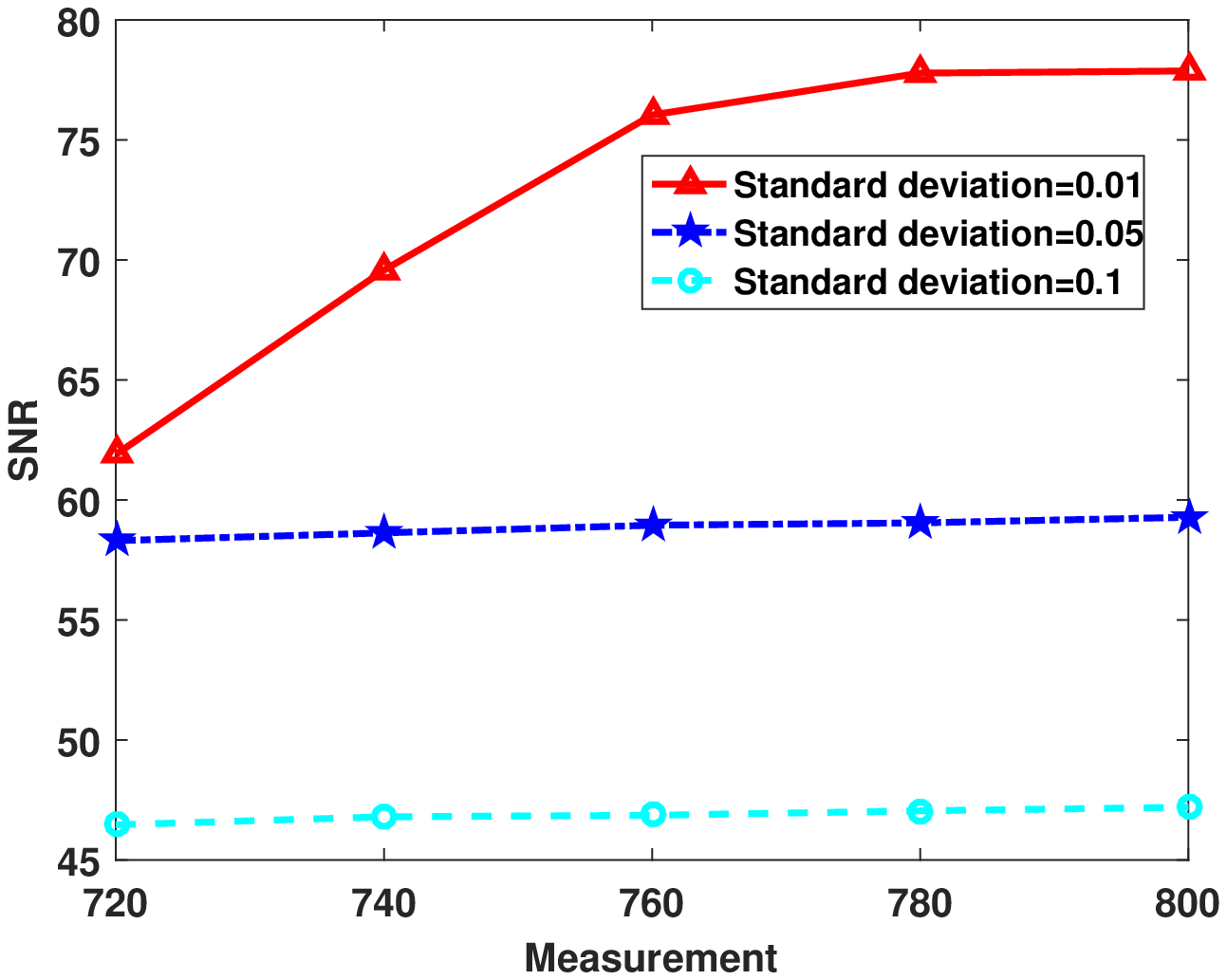}}
\caption{(a) Relative error varying rank with $M=700$, (b) SNR varying number of measurements with $r=6$.}\label{fig.1}
\end{center}
\vspace*{-14pt}
\end{figure}

To further verify the validity of the model (\ref{eq.17}) recovery, we choose random Bernoulli matrix as the measurement matrix, whose entries follows Bernoulli distribution, i.e.,
\[  A_{ij}=  \frac{1}{\sqrt{M}}\begin{cases}1,\quad \ \ &    p=\frac{1}{2},\\    -1, \quad \ \ & p=\frac{1}{2}.  \end{cases}  \]

SNR versus the rank and the relative error versus the number of samples, the results are shown in Fig. \ref{fig.2}(a) and (b) in different $\sigma_0=0.05,~0.10,~0.15$. In Fig. \ref{fig.2}(a), the values of the rank $r$ of the original matrix vary from $4$ to $8$ with $M=700$ and in Fig. \ref{fig.2}(b), the number of samples $M$ ranges from $730$ to $810$ with $r=6$. Fig. \ref{fig.2}(a) and (b) demonstrate that as the variance of noise matrix $Z$ decreases, the recovery effect becomes better, and a smaller rank of the original matrix and a larger number of samples make the reconstruction error smaller (SNR larger).

\begin{figure}[h]
\begin{center}
\subfigure[]{\includegraphics[width=0.40\textwidth]{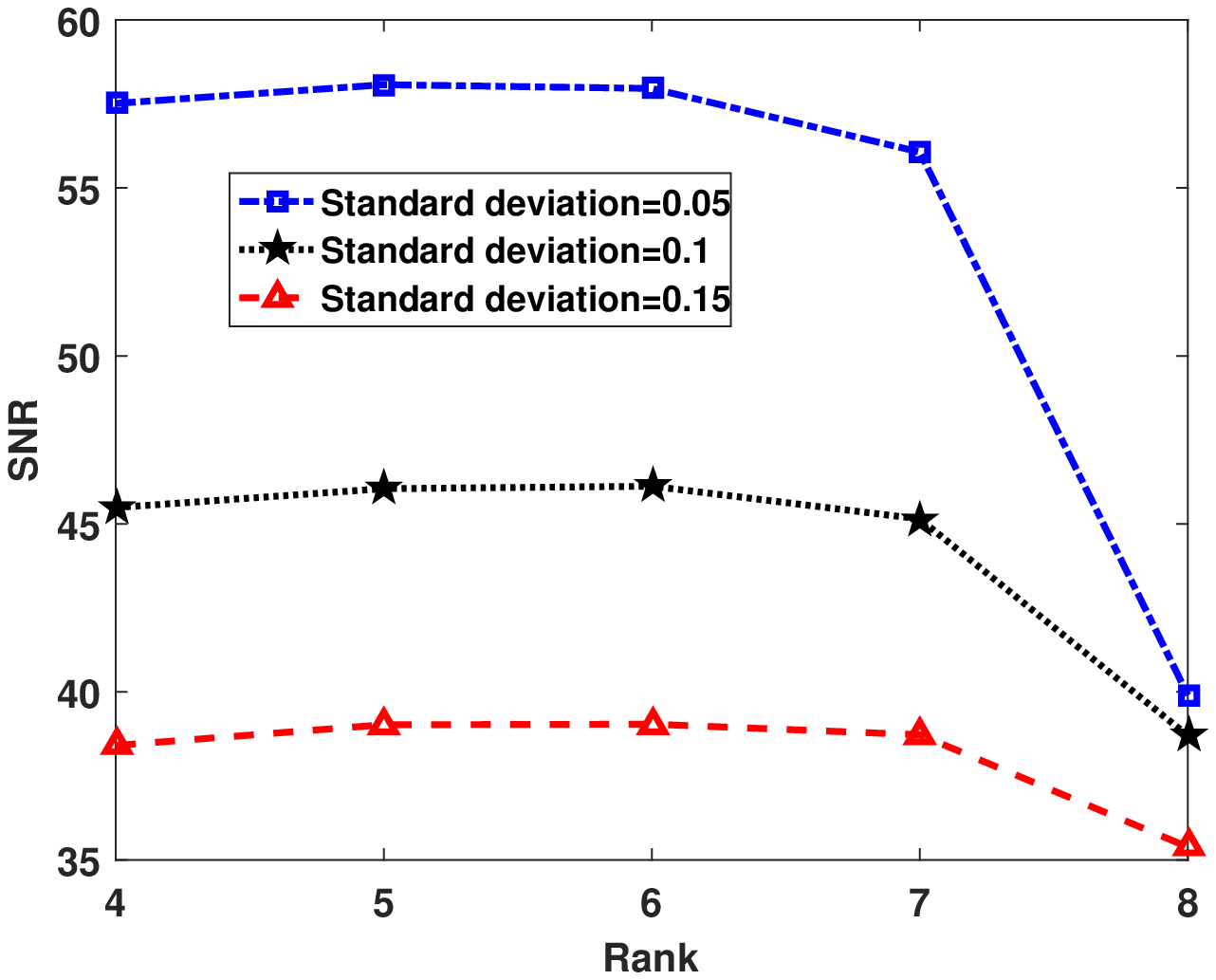}}
\subfigure[]{\includegraphics[width=0.40\textwidth]{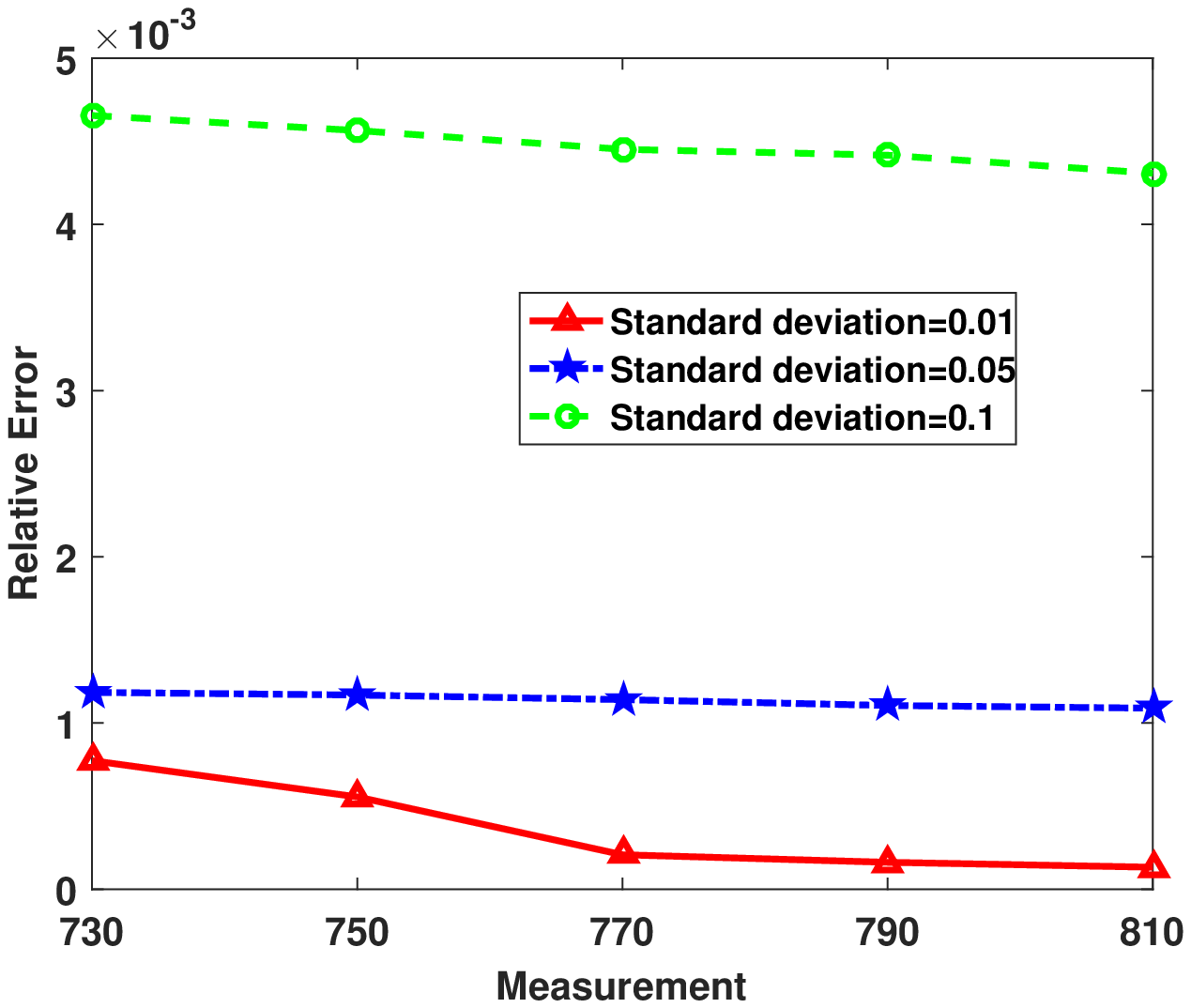}}
\caption{(a) SNR varying rank with $M=700$, (b) Relative error varying number of samples with $r=6$.}\label{fig.2}
\end{center}
\vspace*{-14pt}
\end{figure}

Finally, the effect of noise variance on the performance of model (\ref{eq.17}) reconstruction is illustrated by grayscale image recovery. The original image (in Fig. \ref{fig.3}) has a resolution of $256\times256$. The selection of measurement matrix is the same as that in Fig. \ref{fig.1}. We fix the standard deviation $\sigma=0.01$ of the measurement noise $w$. Due to the limitation of experimental conditions, we scale the original image to a resolution of $30\times30$. The number of measurements is equal to $M=2.5r(m+n-r)+1$ and the rank $r$ equals to $0.2m$. To access the quality of the recovered image, we adopt Structural SIMilarity (SSIM) and Peak Signal-to-Noise Ratio (PSNR). The gained results are reported in Table \ref{tab.1}. The results show again the smaller the variance of noise, the better the recovery effect.

\begin{figure}[h]
\begin{center}
\includegraphics[width=0.40\textwidth]{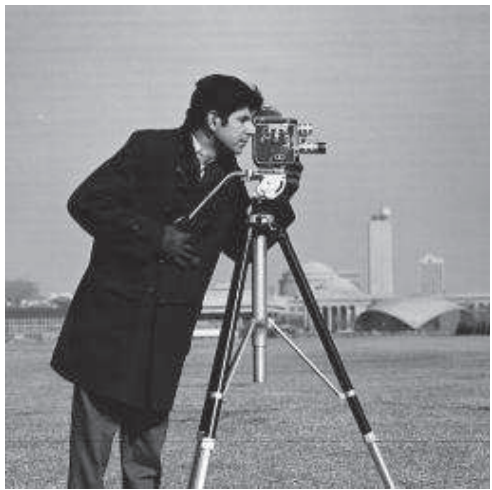}
\caption{Grayscale image of $256\times256$ pixels.}\label{fig.3}
\end{center}
\vspace*{-14pt}
\end{figure}

\makeatletter\def\@captype{table}\makeatother
\caption{PSNR$|$SSIM results on recovery of test image}\label{tab.1}
  \begin{center}
  \begin{tabular}{|l|l|l|l|}
\hline
  & $\sigma_0$ & PSNR & SSIM \\
\hline
 1 & 0.05 & 35.588 & 0.95328 \\
\hline
 2 & 0.10 & 33.248 & 0.92672 \\
\hline
3 & 0.15 & 30.142 & 0.86879 \\
\hline
4 & 0.20 & 27.033 & 0.77871 \\
\hline
\end{tabular}
\end{center}

Some experts may ask, what is the effect of the model and its associating algorithm on image denoising? We have made some attempts in this respect, but we have not yet achieved good experimental results. We are still trying to explore this issue and regard it as an important research direction in the future.

\section{The proofs of theorems}
\label{sec6}

Before proving our main results, we need some auxiliary lemmas.
\begin{lemma} (Gaussian white noise) \label{le.7}
Recall that $w$ is a white noise vector with $\mathbb{E}(w)=0_M$ and $\mbox{Var}(w)=\sigma^2I_M$, and that similarly $Z$ is a white noise matrix satisfying $\mathbb{E}(Z)=0_{m\times n}$ and $\mbox{Var}(Z)=\sigma^2_0I_{mn}$, independent of $w$. In addition, we assume that both $w$ and $Z$ follow Gaussian distribution. Then
\begin{align}
\notag u=\Sigma^{-1/2}_1v\sim\mathbb{N}(0,\theta I_M),
\end{align}
where $\Sigma_1=\Sigma/\theta$, $\Sigma=\sigma^2I_M+\sigma^2_0AA^{\top}$, $\theta=\sigma^2+mn\sigma^2_0/M$, and $v=A\mbox{vec}(Z)+w$.

Furthermore, the noise $u\sim\mathbb{N}(0,\theta I_M)$ satisfies
\begin{align}\label{eq.26}
\mathbb{P}\left(\|u\|_2\leq\sqrt{\theta(M+2\sqrt{M\log M})}\right)\geq 1-\frac{1}{M},
\end{align}
\begin{align}\label{eq.27}
\begin{cases}
\mathbb{P}\left(\|u\|_p\leq M^{1/p}\sqrt{\theta(1+2\sqrt{M^{-1}\log M})}\right)> 1-\frac{1}{M},~0\leq p<2,\\
\mathbb{P}\left(\|u\|_p\leq\sqrt{\theta(M+2\sqrt{M\log M})}\right)\geq 1-\frac{1}{M},~p\geq2.
\end{cases}
\end{align}
\end{lemma}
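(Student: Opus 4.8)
The plan is to prove Lemma~\ref{le.7} in two stages: first establish the distributional claim $u \sim \mathbb{N}(0,\theta I_M)$, then derive the two tail bounds \eqref{eq.26} and \eqref{eq.27} from standard concentration estimates for the norm of a Gaussian vector. For the first stage, note that $v = A\,\mbox{vec}(Z) + w$ is a linear image of the jointly Gaussian vector $(\mbox{vec}(Z),w)$, hence Gaussian; since $\mathbb{E}(Z)=0$ and $\mathbb{E}(w)=0$ we get $\mathbb{E}(v)=0$, and by independence of $Z$ and $w$ together with the white-noise assumptions, $\mbox{Cov}(v) = \sigma^2 I_M + \sigma^2_0 A A^{\top} = \Sigma$. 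Then $u = \Sigma_1^{-1/2} v$ is again a linear image of a Gaussian, so $u$ is Gaussian with mean zero and covariance $\Sigma_1^{-1/2}\Sigma\,\Sigma_1^{-1/2} = \theta\,\Sigma_1^{-1/2}\Sigma_1\Sigma_1^{-1/2} = \theta I_M$, using $\Sigma = \theta\Sigma_1$ and symmetry of $\Sigma_1^{-1/2}$. This is essentially the computation already carried out around \eqref{eq.5}--\eqref{eq.7}; I would just invoke it.

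For the second stage, write $u = \sqrt{\theta}\,g$ with $g \sim \mathbb{N}(0,I_M)$, so $\|u\|_2^2 = \theta\|g\|_2^2$ and $\|g\|_2^2 \sim \chi^2_M$. I would use the standard chi-square concentration inequality (Laurent--Massart type): $\mathbb{P}(\|g\|_2^2 \ge M + 2\sqrt{Mt} + 2t) \le e^{-t}$. Choosing $t = \log M$ gives $\mathbb{P}(\|g\|_2^2 \ge M + 2\sqrt{M\log M} + 2\log M) \le 1/M$, and after absorbing the lower-order $2\log M$ term (or noting $M + 2\sqrt{M\log M} + 2\log M \le$ a slightly adjusted bound, or simply using the one-sided form $\mathbb{P}(\|g\|_2^2 \ge M + 2\sqrt{Mt}) \le e^{-t}$ available for the relevant regime) one obtains $\mathbb{P}(\|u\|_2 \le \sqrt{\theta(M + 2\sqrt{M\log M})}) \ge 1 - 1/M$, which is \eqref{eq.26}.

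For \eqref{eq.27} with $p \ge 2$, I would use the norm inequality $\|u\|_p \le \|u\|_2$ on $\mathbb{R}^M$, so the event in \eqref{eq.26} is contained in the event $\{\|u\|_p \le \sqrt{\theta(M+2\sqrt{M\log M})}\}$, giving the second line immediately. For $0 \le p < 2$, I would use the reverse-direction inequality $\|u\|_p \le M^{1/p - 1/2}\|u\|_2$, so on the event of \eqref{eq.26} we have $\|u\|_p \le M^{1/p-1/2}\sqrt{\theta(M + 2\sqrt{M\log M})} = M^{1/p}\sqrt{\theta(1 + 2\sqrt{M^{-1}\log M})}$, which is exactly the claimed bound (the strict inequality $> 1 - 1/M$ versus $\ge$ is a cosmetic point and can be handled by a marginally strict version of the chi-square bound).

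The main obstacle is minor: it is purely bookkeeping around the chi-square tail, namely matching the precise constants in \eqref{eq.26}--\eqref{eq.27} to a clean off-the-shelf concentration inequality. The Laurent--Massart bound produces $M + 2\sqrt{Mt} + 2t$ rather than $M + 2\sqrt{Mt}$, so I would either cite a sharper one-sided bound of the form $\mathbb{P}(\|g\|_2^2 - M \ge 2\sqrt{Mt}) \le e^{-t}$, or simply note that for $M$ large enough the extra $2\log M$ is dominated and can be folded into the stated expression with a harmless adjustment; in a ``holds with high probability'' statement this slack is immaterial. Everything else is linear algebra and monotonicity of $\ell_p$ norms, so no real difficulty is expected.
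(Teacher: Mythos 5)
Your proposal is correct and follows essentially the same route as the paper: Gaussianity of $u$ via linearity and the covariance computation $\Sigma_1^{-1/2}\Sigma\,\Sigma_1^{-1/2}=\theta I_M$, a chi-square concentration bound for \eqref{eq.26}, and the norm comparisons $\|u\|_p\leq\|u\|_2$ (for $p\geq2$) and $\|u\|_p\leq M^{1/p-1/2}\|u\|_2$ (for $p<2$) for \eqref{eq.27}. The only difference is that the paper disposes of \eqref{eq.26} by citing Lemma III.3 of Lin et al.\ rather than working from the Laurent--Massart inequality, so your remark about the extra $2\log M$ term is a fair (and slightly more careful) account of the same step.
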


\begin{proof}[Proof of the lemma \ref{le.7}]
Note that $A$ is a linear transformation. By applying the property of Gaussian random vector, the definition of covariance matrix and some elementary calculations, we get
\begin{align}
\notag v\sim \mathbb{N}(0,\sigma^2I_M+\sigma^2_0AA^{\top}).
\end{align}
Since $\Sigma^{-1/2}_1$ is a linear map, we get
\begin{align}
\notag u=\Sigma^{-1/2}_1v\sim\mathbb{N}(0,\theta I_M).
\end{align}

The proof of the inequality (\ref{eq.26}) is similar to Lemma $III.3$ in \cite{Lin et al}.

In the following, we prove the equation (\ref{eq.27}). Since the proofs of two cases are similar, we only provide the proof of $0\leq p<2$. By employing the inequality that $\|x\|_2\leq\|x\|_p\leq M^{1/p-1/2}\|x\|_2$ for all $x\in\mathbb{R}^M$ and fixed $1\leq p<2$ and $\|u\|_2\leq\sqrt{\theta(M+2\sqrt{M\log M})}$, we get
$$\|u\|_p\leq M^{1/p-1/2}\sqrt{\theta(M+2\sqrt{M\log M})},$$
which implies
\begin{align}
\notag\mathbb{P}\left(\|u\|_p\leq M^{1/p}\sqrt{\theta(1+2\sqrt{M^{-1}\log M})}\right)>\mathbb{P}\left(\|u\|_2\leq\sqrt{\theta(M+2\sqrt{M\log M})}\right)\geq 1-\frac{1}{M}.
\end{align}
\end{proof}

\begin{lemma} (Gaussian mixture noise) \label{le.8}
Assume that i.i.d. $w_i$ and $Z_{ij}$ follow respectively two-term Gaussian mixture models, i.e., $w_i\sim(1-\xi)\mathbb{N}(0,\sigma^2)+\xi\mathbb{N}(0,\kappa\sigma^2),~i=1,\cdots,M$, and $Z_{ij}\sim(1-\eta)\mathbb{N}(0,\sigma^2_0)+\eta\mathbb{N}(0,\gamma\sigma^2_0),~i=1,\cdots,m,~j=1,\cdots,n$, where $0\leq\xi<1~(0\leq\eta<1)$ represents the portion of outliers in the noise and $\kappa>1~(\gamma>1)$ stands for
the strength of outliers. Then
\begin{align}
\notag u=\Sigma'^{-1/2}_1v\sim\mathbb{N}(0,\theta' I_M),
\end{align}
namely, $u_i$ obeys the Gaussian mixture noise, i.e., $u_i\sim(1-\xi)\mathbb{N}(0,\sigma^2)+\xi\mathbb{N}(0,\kappa\sigma^2)+(1-\eta)\mathbb{N}(0,mn\sigma^2_0/M)
+\eta\mathbb{N}(0,mn\gamma\sigma^2_0/M)$,
where $\Sigma'_1=\Sigma'/\theta'$, $\Sigma'=[(1-\xi)+\kappa\xi]\sigma^2I_M+[(1-\eta)+\gamma\eta]\sigma^2_0AA^{\top}$, $\theta'=[(1-\xi)+\kappa\xi]\sigma^2+mn[(1-\eta)+\gamma\eta]\sigma^2_0/M$, and $v=A\mbox{vec}(Z)+w$.

Besides, the noise $u\sim\mathbb{N}(0,\theta' I_M)$ fulfills
\begin{align}\label{eq.28}
\begin{cases}
\mathbb{P}\left(\|u\|_p\leq M^{1/p}\sqrt{\theta'(1+2\sqrt{M^{-1}\log M})}\right)> 1-\frac{1}{M},~0\leq p<2,\\
\mathbb{P}\left(\|u\|_p\leq\sqrt{\theta'(M+2\sqrt{M\log M})}\right)\geq 1-\frac{1}{M},~p\geq2,
\end{cases}
\end{align}
where $\theta'=[(1-\xi)+\kappa\xi]\sigma^2+mn[(1-\eta)+\gamma\eta]\sigma^2_0/M$. And similar to \cite{Wen et al 2017}, the $p$-th moment of such a noise process is given by
\begin{align}\label{eq.29}
\mathbb{E}\{\|u\|_p^p\}=\frac{M2^{p/2}\theta'^{p/2}\Gamma(\frac{p+1}{2})}{\sqrt{\pi}}
\end{align}
where $\theta'=[(1-\xi)+\kappa\xi]\sigma^2+mn[(1-\eta)+\gamma\eta]\sigma^2_0/M$.
\end{lemma}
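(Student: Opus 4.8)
The plan is to mirror the structure of Lemma~\ref{le.7}: first pin down the second-order structure of the two mixture components, then whiten at the level of covariances, and finally transfer the distributional description and the norm estimates.

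First I would compute the moments of the components. For a scalar $w_i\sim(1-\xi)\mathbb{N}(0,\sigma^2)+\xi\mathbb{N}(0,\kappa\sigma^2)$ one has $\mathbb{E}(w_i)=0$ and $\mbox{Var}(w_i)=(1-\xi)\sigma^2+\xi\kappa\sigma^2=[(1-\xi)+\kappa\xi]\sigma^2$, and likewise $\mathbb{E}(Z_{ij})=0$ with $\mbox{Var}(Z_{ij})=[(1-\eta)+\gamma\eta]\sigma^2_0$. By the i.i.d.\ assumption this gives $\mbox{Cov}(w)=[(1-\xi)+\kappa\xi]\sigma^2 I_M$ and $\mbox{Cov}(\mbox{vec}(Z))=[(1-\eta)+\gamma\eta]\sigma^2_0 I_{mn}$. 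Exactly as in Section~\ref{sec2}, the independence of $w$ and $Z$ then yields $\mbox{Cov}(v)=[(1-\eta)+\gamma\eta]\sigma^2_0\,AA^{\top}+[(1-\xi)+\kappa\xi]\sigma^2 I_M=\Sigma'$, and with $\theta'=[(1-\xi)+\kappa\xi]\sigma^2+mn[(1-\eta)+\gamma\eta]\sigma^2_0/M$ and $\Sigma'_1=\Sigma'/\theta'$ the whitened vector satisfies $\mbox{Cov}(u)=\mbox{Cov}(\Sigma'^{-1/2}_1v)=\Sigma'^{-1/2}_1\Sigma'\Sigma'^{-1/2}_1=\theta'\,\Sigma'^{-1/2}_1\Sigma'_1\Sigma'^{-1/2}_1=\theta' I_M$, which is the whitening assertion. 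For the distributional description of $u_i$, I would represent each mixture as a Gaussian scale mixture driven by a latent Bernoulli label (the label $\xi$ for the coordinates of $w$, the label $\eta$ for those of $Z$): conditionally on the labels the coordinate $(A\mbox{vec}(Z))_i+w_i$ is Gaussian with the corresponding variance, and after the scalar rescaling by $(\Sigma'^{-1/2}_1)_{ii}$ this reproduces the stated four-term mixture.

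Next I would establish the tail bounds (\ref{eq.28}). Since a two-term Gaussian mixture with bounded parameters $\kappa,\gamma$ is sub-Gaussian, each $u_i^2$ is sub-exponential, so the argument used for (\ref{eq.26}) in Lemma~\ref{le.7} (cf.\ Lemma~III.3 in \cite{Lin et al}) applies with $\theta$ replaced by $\theta'$, giving $\mathbb{P}\big(\|u\|_2\leq\sqrt{\theta'(M+2\sqrt{M\log M})}\big)\geq1-1/M$. The $\ell_p$ statements then follow exactly as in the proof of (\ref{eq.27}): for $0\leq p<2$ the inequality $\|u\|_p\leq M^{1/p-1/2}\|u\|_2$ turns the above event into $\|u\|_p\leq M^{1/p}\sqrt{\theta'(1+2\sqrt{M^{-1}\log M})}$, while for $p\geq2$ one uses $\|u\|_p\leq\|u\|_2$; passing to probabilities yields (\ref{eq.28}).

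Finally, for the moment identity (\ref{eq.29}) I would use linearity of expectation, $\mathbb{E}\{\|u\|_p^p\}=\sum_{i=1}^M\mathbb{E}|u_i|^p=M\,\mathbb{E}|u_1|^p$, together with the absolute-moment formula for a centered Gaussian: if $g\sim\mathbb{N}(0,\theta')$ then $\mathbb{E}|g|^p=(2\theta')^{p/2}\Gamma(\tfrac{p+1}{2})/\sqrt{\pi}$, whence $\mathbb{E}\{\|u\|_p^p\}=M\,2^{p/2}\theta'^{\,p/2}\Gamma(\tfrac{p+1}{2})/\sqrt{\pi}$, matching \cite{Wen et al 2017}. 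The delicate point — and the one I would flag explicitly — is that the linear image $(A\mbox{vec}(Z))_i$ of $mn$ independent two-term mixtures is in fact a Gaussian scale mixture with many mixing levels rather than literally a four-term mixture, so the displayed form of $u_i$ and the moment formula (\ref{eq.29}) (which treats $u_i$ as $\mathbb{N}(0,\theta')$) should be read in the modeling sense of \cite{Wen et al 2017}; the two facts actually used downstream, namely $\mbox{Cov}(u)=\theta' I_M$ and the high-probability bounds (\ref{eq.28}), hold exactly.
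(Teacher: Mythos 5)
The paper itself gives no proof of Lemma~\ref{le.8}; it is stated with pointers to Lemma~\ref{le.7} and to \cite{Wen et al 2017}, so your proposal is being measured against the route the paper implicitly intends, and on that score it matches: mixture second moments, whitening via $\Sigma'_1=\Sigma'/\theta'$ to get $\mbox{Cov}(u)=\theta' I_M$, the $\ell_p$-versus-$\ell_2$ comparison copied from the proof of (\ref{eq.27}), and the Gaussian absolute-moment formula for (\ref{eq.29}). Your explicit observation that $(A\mbox{vec}(Z))_i$ is a scale mixture over $2^{mn}$ label configurations rather than the displayed four-term object (whose weights do not even sum to one) is correct and is more careful than the lemma's own statement, which simultaneously asserts $u\sim\mathbb{N}(0,\theta' I_M)$ and that $u_i$ is a non-Gaussian mixture.

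The one genuine soft spot is your closing claim that the bounds (\ref{eq.28}) ``hold exactly'' for the true mixture law. The constant $2$ in $M+2\sqrt{M\log M}$ comes from the chi-squared concentration inequality used in Lemma~III.3 of \cite{Lin et al}, which is Gaussian-specific. A generic sub-Gaussian/sub-exponential argument for the mixture gives $\|u\|_2^2\leq \theta' M+C\sqrt{M\log M}$ only with a constant $C$ depending on the sub-Gaussian norm of the components, which scales like $\max\{\kappa,\gamma\}$ times the variance proxy and can exceed $2\theta'$ when $\xi,\eta$ are small. So (\ref{eq.28}) with those exact constants is only justified under the (literal, and strictly speaking false) reading $u\sim\mathbb{N}(0,\theta' I_M)$ — the same reading needed for (\ref{eq.29}). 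You should either state (\ref{eq.28}) under that Gaussian idealization, as the lemma does, or weaken the constant; you cannot have both the true mixture distribution and the Gaussian constants.
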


The following lemma presents a matrix of Stechkin's bound generalizing the result of sparse vectors \cite{Foucart and Rauhut} to the case of low-rank matrices.

\begin{lemma} (\cite{Kabanava et al})\label{le.3}
Let $X\in\mathbb{R}^{m\times n}$ and $r\leq\min\{m,n\}$. Then, for $p>0$,
\begin{align}
\notag\|X_{[r]^c}\|_p\leq\frac{1}{r^{1-1/p}}\|X\|_*.
\end{align}
\end{lemma}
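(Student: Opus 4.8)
The plan is to reduce the claimed matrix inequality to the classical Stechkin estimate for the nonincreasing sequence of singular values of $X$, and then carry out the short elementary argument on that sequence; unitary invariance of the Schatten norms does all the work of stripping away the matrix structure.

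First I would set up notation. Using the SVD $X=\sum_{i=1}^{n_0}\sigma_i(X)u_iv_i^{\top}$ with $\sigma_1(X)\ge\sigma_2(X)\ge\cdots\ge\sigma_{n_0}(X)\ge 0$ and $n_0=\min\{m,n\}$, and recalling $X_{[r]^c}=\sum_{i=r+1}^{n_0}\sigma_i(X)u_iv_i^{\top}$, the Schatten $p$-quasinorm of $X_{[r]^c}$ is $\|X_{[r]^c}\|_p=\big(\sum_{i=r+1}^{n_0}\sigma_i(X)^p\big)^{1/p}$, while $\|X\|_*=\sum_{i=1}^{n_0}\sigma_i(X)$. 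Hence the assertion is exactly the scalar inequality $\big\|(\sigma_{r+1}(X),\dots,\sigma_{n_0}(X))\big\|_p\le r^{1/p-1}\,\|(\sigma_1(X),\dots,\sigma_{n_0}(X))\|_1$ for a nonincreasing nonnegative sequence, so from here on only that sequence matters. (If $n_0=r$ the left-hand side vanishes and there is nothing to prove, so assume $r<n_0$.)

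Second, I would establish the pointwise tail bound $\sigma_i(X)\le\|X\|_*/r$ for every $i\ge r+1$: for such an $i$ and any $k\in\{1,\dots,r\}$ one has $\sigma_k(X)\ge\sigma_i(X)$ by monotonicity (the indices $1,\dots,r$ exist precisely because $r\le n_0$), and summing over $k=1,\dots,r$ gives $r\,\sigma_i(X)\le\sum_{k=1}^{r}\sigma_k(X)\le\|X\|_*$. Then I would estimate the tail, writing $\sigma_i(X)^p=\sigma_i(X)^{p-1}\sigma_i(X)$, bounding the factor $\sigma_i(X)^{p-1}$ by $(\|X\|_*/r)^{p-1}$ (for $p\ge 1$ the exponent $p-1$ is nonnegative, so $t\mapsto t^{p-1}$ is nondecreasing), and summing:
\begin{align}
\notag \sum_{i=r+1}^{n_0}\sigma_i(X)^p\le\Big(\frac{\|X\|_*}{r}\Big)^{p-1}\sum_{i=r+1}^{n_0}\sigma_i(X)\le\Big(\frac{\|X\|_*}{r}\Big)^{p-1}\|X\|_*=\frac{\|X\|_*^{\,p}}{r^{\,p-1}}.
\end{align}
Taking $p$-th roots gives $\|X_{[r]^c}\|_p\le\|X\|_*\,/\,r^{(p-1)/p}=r^{1/p-1}\|X\|_*$, as claimed.

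I do not expect a genuine obstacle: this is the standard Stechkin bound (the analogue for sparse vectors in \cite{Foucart and Rauhut}) transported to singular values. The only points needing care are keeping the direction of the exponent $p-1$ straight when pushing the pointwise bound through the power — this is where $p\ge 1$ enters, which is exactly the regime in which the lemma is invoked in the paper (e.g. $1\le p<2$ in the remark after Theorem~\ref{th.2}) — and the trivial bookkeeping that $r\le n_0$ so the averaging over the top $r$ singular values is legitimate.
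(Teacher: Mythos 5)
Your proof is correct and is essentially the standard Stechkin argument used in the source the paper cites for this lemma (Kabanava et al.); the paper itself offers no proof, only the citation. Your caveat about needing $p\ge 1$ is well taken and worth recording: as literally stated with $p>0$ the inequality fails for $0<p<1$ (e.g.\ three unit singular values, $r=1$, $p=1/2$ gives $\|X_{[r]^c}\|_p=4>3=\|X\|_*$), but the paper only ever invokes the lemma with $p\ge 1$, so your argument covers every actual use.
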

The following lemma is a useful inequality on matrix norm.
\begin{lemma} (\cite{Horn and Johnson})\label{le.4}
For any $X, Y\in\mathbb{R}^{m\times n}$, we have
\begin{align}
\notag\|X-Y\|_*\geq\sum^{n_0}_{i=1}|\sigma_i(X)-\sigma_i(Y)|,
\end{align}
where $n_0=\min\{m,n\}$.
\end{lemma}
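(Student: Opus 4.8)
The statement is the singular-value analogue of the Hoffman--Wielandt/Mirsky perturbation inequality, and the plan is to reduce it to the Hermitian eigenvalue case by a dilation and then invoke Lidskii's majorization theorem. For $M\in\mathbb{R}^{m\times n}$ I would introduce the symmetric dilation
\begin{align}
\notag \widetilde{M}=\begin{pmatrix}0 & M\\ M^{\top} & 0\end{pmatrix}\in\mathbb{R}^{(m+n)\times(m+n)},
\end{align}
whose eigenvalue multiset is $\{\pm\sigma_i(M):1\leq i\leq n_0\}$ together with $|m-n|$ zeros; in decreasing order these read $\sigma_1(M),\dots,\sigma_{n_0}(M),0,\dots,0,-\sigma_{n_0}(M),\dots,-\sigma_1(M)$. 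The essential point is the linearity $\widetilde{X}-\widetilde{Y}=\widetilde{X-Y}$, so that $\widetilde X$, $\widetilde Y$ and $\widetilde{X-Y}$ are symmetric matrices whose sorted spectra are completely determined by the singular values of $X$, $Y$ and $X-Y$ respectively.

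Next I would apply Lidskii's theorem to the symmetric pair $\widetilde X,\widetilde Y$: the vector of componentwise differences of their decreasingly sorted eigenvalues is majorized by the decreasingly sorted eigenvalue vector of $\widetilde X-\widetilde Y=\widetilde{X-Y}$. Since $t\mapsto|t|$ is convex, and $a\prec b$ implies $\sum_i\phi(a_i)\leq\sum_i\phi(b_i)$ for every convex $\phi$, this yields
\begin{align}
\notag \sum_{k}\bigl|\lambda^{\downarrow}_k(\widetilde X)-\lambda^{\downarrow}_k(\widetilde Y)\bigr|\leq\sum_{k}\bigl|\lambda^{\downarrow}_k(\widetilde{X-Y})\bigr|.
\end{align}
Inserting the explicit spectra listed above, the left-hand side collapses to $2\sum_{i=1}^{n_0}|\sigma_i(X)-\sigma_i(Y)|$ (the middle zeros contribute nothing and the negative tail merely duplicates the positive head), while the right-hand side equals $2\sum_{i=1}^{n_0}\sigma_i(X-Y)=2\|X-Y\|_*$. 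Dividing by $2$ gives the claimed inequality.

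The one genuinely nontrivial ingredient is Lidskii's eigenvalue majorization theorem; that is where the real difficulty sits if one insists on a fully self-contained argument, but in the present context it is entirely legitimate to cite it, and indeed the lemma as stated is precisely the nuclear-norm instance of Mirsky's theorem, so quoting \cite{Horn and Johnson} suffices. As an aside, a dilation-free alternative is to establish the weak majorization of $(|\sigma_i(X)-\sigma_i(Y)|)_i$ by $(\sigma_i(X-Y))_i$ and then take $\ell_1$-norms, but that route still rests on a majorization statement of comparable depth, so I would favour the dilation-plus-Lidskii packaging for its brevity.
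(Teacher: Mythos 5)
Your argument is correct. The paper does not actually prove this lemma --- it is quoted verbatim from Horn and Johnson (it is Mirsky's inequality, the nuclear-norm case of the Lidskii--Mirsky--Wielandt perturbation bounds) --- so there is no in-paper proof to compare against. Your derivation via the Jordan--Wielandt dilation $M\mapsto\bigl(\begin{smallmatrix}0&M\\ M^{\top}&0\end{smallmatrix}\bigr)$, the linearity of the dilation, Lidskii's majorization $\lambda^{\downarrow}(\widetilde X)-\lambda^{\downarrow}(\widetilde Y)\prec\lambda^{\downarrow}(\widetilde{X-Y})$, and the convexity of $t\mapsto|t|$ is the standard textbook route, and the bookkeeping of the spectra (each side doubling, the $|m-n|$ zeros contributing nothing) is handled correctly. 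The only caveat is the one you already flag: the proof is self-contained only modulo Lidskii's theorem, which carries essentially all the depth; citing it (or Horn and Johnson directly, as the paper does) is entirely appropriate here.
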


The result below reveals that the distance between the original matrix and its corresponding solution is bounded by the best $r$-rank approximation error and the Euclidean norm of the difference between their measurements provided that the Frobenius-robust rank null space property.

\begin{lemma}\label{le.5}
Set $\delta_1=\delta/(1-\delta)$ with $\delta<1/2$. Assume that $\mathcal{A}:\mathbb{R}^{m\times n}\to\mathbb{R}^M$ fulfills the Frobenius-robust rank null space property with constants $0<\rho<1$ and $\tau>0$. Then, a solution $X^*$ of problem (\ref{eq.17}) approximates the matrix $X$ with errors
\begin{align}
\label{eq.20}\|X-X^*\|_*\leq \frac{2(1+\rho)}{1-\rho}\|X_{[r]^c}\|_*+\frac{2\tau\sqrt{r}}{(1-\rho)\sqrt{1-\delta_1}}\|\mathcal{B}(X-X^*)\|_2.
\end{align}
\end{lemma}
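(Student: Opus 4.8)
The plan is to follow the classical route for stable recovery under a robust null space property (in the spirit of Theorem~4.25 of \cite{Foucart and Rauhut}), carried over to matrices with the help of the singular value inequality of Lemma~\ref{le.4}. Write $E:=X-X^*$ for the reconstruction error. Since $\tilde{y}=\mathcal{B}(X)+u$ with $\|u\|_2\le\epsilon$, the matrix $X$ is feasible for (\ref{eq.17}), so optimality of $X^*$ forces $\|X^*\|_*\le\|X\|_*$; the whole argument consists of upgrading this single scalar inequality, together with the null space property, to a quantitative bound on $\|E\|_*$.

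Two ingredients are needed. First, I would recast the Frobenius-robust rank null space property of $\mathcal{B}$ furnished by Lemma~\ref{le.1} into a nuclear-norm form: applying that lemma to $E$, multiplying through by $\sqrt{r}$, and using $\|E_{[r]}\|_*\le\sqrt{r}\,\|E_{[r]}\|_F$ (valid since $E_{[r]}$ has rank at most $r$) gives
\begin{align}
\notag \|E_{[r]}\|_*\le\rho\,\|E_{[r]^c}\|_*+\frac{\tau\sqrt{r}}{\sqrt{1-\delta_1}}\,\|\mathcal{B}(E)\|_2.
\end{align}
Second, I would establish the cone-type estimate $\|E_{[r]^c}\|_*\le\|E_{[r]}\|_*+2\|X_{[r]^c}\|_*$. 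For this, split off $X_{[r]^c}$ by the triangle inequality, $\|X^*\|_*=\|X-E\|_*\ge\|X_{[r]}-E\|_*-\|X_{[r]^c}\|_*$, then invoke Lemma~\ref{le.4} on the pair $(X_{[r]},E)$ to get $\|X_{[r]}-E\|_*\ge\sum_i|\sigma_i(X_{[r]})-\sigma_i(E)|\ge\|X_{[r]}\|_*-\|E_{[r]}\|_*+\|E_{[r]^c}\|_*$ (the singular values of $X_{[r]}$ past index $r$ vanish, so the tail contributes $\|E_{[r]^c}\|_*$ while the head gives $\|X_{[r]}\|_*-\|E_{[r]}\|_*$); combining with $\|X\|_*=\|X_{[r]}\|_*+\|X_{[r]^c}\|_*$ and $\|X^*\|_*\le\|X\|_*$ isolates the claimed inequality.

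It then remains only to solve the resulting two-inequality system. Feeding the first display into the cone estimate gives $(1-\rho)\|E_{[r]^c}\|_*\le 2\|X_{[r]^c}\|_*+\frac{\tau\sqrt{r}}{\sqrt{1-\delta_1}}\|\mathcal{B}(E)\|_2$, and then $\|E\|_*=\|E_{[r]}\|_*+\|E_{[r]^c}\|_*\le(1+\rho)\|E_{[r]^c}\|_*+\frac{\tau\sqrt{r}}{\sqrt{1-\delta_1}}\|\mathcal{B}(E)\|_2$ yields (\ref{eq.20}) after collecting coefficients into $\frac{2(1+\rho)}{1-\rho}$ and $\frac{2\tau\sqrt{r}}{(1-\rho)\sqrt{1-\delta_1}}$. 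I expect the cone inequality to be the only delicate point: in the sparse case it is immediate from decomposing the $\ell_1$ norm over a common support, whereas for matrices the dominant singular subspaces of $X$ and of $E$ need not align, so one genuinely needs the Weyl/Lidskii-type bound of Lemma~\ref{le.4}; it is also worth noting that, because the FRRNSP of Definition~\ref{de.1} is phrased through each matrix's own best rank-$r$ truncation, Lemma~\ref{le.1} applies to $E$ with no extra work.
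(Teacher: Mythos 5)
Your proposal is correct and follows essentially the same route as the paper: minimality of $X^*$ plus the singular-value inequality of Lemma~\ref{le.4} yields the cone estimate $\|E_{[r]^c}\|_*\le\|E_{[r]}\|_*+2\|X_{[r]^c}\|_*$, which is then combined with the FRRNSP of $\mathcal{B}$ from Lemma~\ref{le.1} and resolved into (\ref{eq.20}) with identical constants. The only cosmetic difference is that you apply Lemma~\ref{le.4} to the pair $(X_{[r]},E)$ after peeling off $X_{[r]^c}$ by the triangle inequality, whereas the paper applies it directly to $(X,X-X^*)$ and splits the resulting sum at index $r$; these are equivalent.
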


\begin{proof}[Proof of the lemma \ref{le.5}] By exploiting Lemma \ref{le.4}, we get
\begin{align}
\notag\|X^*\|_*&=\|X-(X-X^*)\|_*\geq\sum^{n_0}_{i=1}|\sigma_i(X)-\sigma_i(X-X^*)|\\
\notag&=\sum^{r}_{i=1}|\sigma_i(X)-\sigma_i(X-X^*)|+\sum^{n_0}_{i=r+1}|\sigma_i(X)-\sigma_i(X-X^*)|\\
\notag&\geq\sum^{r}_{i=1}(\sigma_i(X)-\sigma_i(X-X^*))+\sum^{n_0}_{i=r+1}(\sigma_i(X-X^*)-\sigma_i(X)).
\end{align}
Therefore,
\begin{align}
\notag\|(X-X^*)_{[r]^c}\|_*&\leq\|X^*\|_*-\sum^{r}_{i=1}\sigma_i(X)+\sum^{r}_{i=1}\sigma_i(X-X^*)
+\sum^{n_0}_{i=r+1}\sigma_i(X)\\
\notag&\overset{\text{(a)}}{\leq}\|X^*\|_*-\|X\|_*+\sqrt{r}\|(X-X^*)_{[r]}\|_F+2\|X_{[r]^c}\|_*\\
\notag&\overset{\text{(b)}}{\leq}\sqrt{r}\|(X-X^*)_{[r]}\|_F+2\|X_{[r]^c}\|_*,
\end{align}
where (a) follows from H$\ddot{o}$lder's inequality, and (b) is due to the minimality of $X^*$. Employing the Frobenius-robust rank null space property of the linear measurement map $\mathcal{B}$ to the inequality above, rearranging the terms and observing $0<\rho<1$, we get
\begin{align}
\label{eq.21}\|(X-X^*)_{[r]^c}\|_*\leq\frac{1}{1-\rho}\left(\frac{\tau\sqrt{r}}{\sqrt{1-\delta_1}}\|\mathcal{B}(X-X^*)\|_2
+2\|X_{[r]^c}\|_*\right).
\end{align}
Then,
\begin{align}
\notag\|X-X^*\|_*&=\|(X-X^*)_{[r]}\|_*+\|(X-X^*)_{[r]^c}\|_*\\
\notag&\overset{\text{(a)}}{\leq}\sqrt{r}\|(X-X^*)_{[r]}\|_F
+\|(X-X^*)_{[r]^c}\|_*\\
\notag&\overset{\text{(b)}}{\leq}(1+\rho)\|(X-X^*)_{[r]^c}\|_*
+\frac{\tau\sqrt{r}}{\sqrt{1-\delta_1}}\|\mathcal{B}(X-X^*)\|_2\\
\notag&\overset{\text{(c)}}{\leq}\frac{2(1+\rho)}{1-\rho}\|X_{[r]^c}\|_*
+\frac{2\tau\sqrt{r}}{(1-\rho)\sqrt{1-\delta_1}}\|\mathcal{B}(X-X^*)\|_2,
\end{align}
where (a) is from H$\ddot{o}$lder's inequality, (b) is due to the Frobenius-robust rank null space property of $\mathcal{B}$, and (c) follows from the inequality (\ref{eq.21}). The proof is complete.
\end{proof}
\begin{proof}[Proof of the theorem \ref{th.2}] By utilizing Lemma \ref{le.3}, we get
\begin{align}
\notag \|X-X^*\|_F&\overset{\text{(a)}}{\leq}\|(X-X^*)_{[r]}\|_F+\|(X-X^*)_{[r]^c}\|_F\\
\notag&\leq\|(X-X^*)_{[r]}\|_F+\frac{1}{\sqrt{r}}\|X-X^*\|_*\\
\notag&\overset{\text{(b)}}{\leq}\frac{1+\rho}{\sqrt{r}}\|X-X^*\|_*+\frac{\tau}{\sqrt{1-\delta_1}}\|\mathcal{B}(X-X^*)\|_2,
\end{align}
where for (a) we make use of the triangular inequality, and (b) follows from the Frobenius-robust rank null space property of $\mathcal{B}$. Plugging (\ref{eq.20}) into the above inequality, observing the fact that $\|u\|_2\leq\epsilon$, and due to the inequality $\|\mathcal{B}(X-X^*)\|_2\leq\|\mathcal{B}(X)-\tilde{y}\|_2+\|\mathcal{B}(X^*)-\tilde{y}\|_2$, the result is deduced.
\end{proof}

The following outcome clears that under the stable rank null space property, the distance between the matrix to be recovered and its associating solution is controlled by the best $r$-rank approximation error and the Euclidean distance between their measurements.

\begin{lemma}\label{le.6}
Set $\delta_1=\delta/(1-\delta)$ with $\delta<1/2$. Let $X^*$ be the optimal solution of problem (\ref{eq.17}) with $\|\mathcal{B}(X^*)-\tilde{y}\|_2\leq \epsilon$. Suppose that we observe $\tilde{y}=\mathcal{B}(X)+u$ with $\|\mathcal{B}(X)-\tilde{y}\|_2\leq \epsilon$ and $\mathcal{A}:\mathbb{R}^{m\times n}\to\mathbb{R}^M$ meets the stable rank null space property with constants $0<\rho<1$ and $\tau>0$. Then,
\begin{align}
\label{eq.22}\|X-X^*\|_*\leq \frac{2(1+\rho)}{1-\rho}\|X_{[r]^c}\|_*+\frac{2\tau}{(1-\rho)\sqrt{1-\delta_1}}\|\mathcal{B}(X-X^*)\|_2.
\end{align}
\end{lemma}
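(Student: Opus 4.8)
The plan is to mimic the proof of Lemma~\ref{le.5} almost verbatim, substituting the stable rank null space property (SRNSP) of $\mathcal{B}$ (established in Lemma~\ref{le.2}) for the Frobenius-robust version wherever it was invoked. First I would apply Lemma~\ref{le.4} to $\|X^*\|_* = \|X - (X-X^*)\|_*$ exactly as before, splitting the sum $\sum_{i=1}^{n_0}|\sigma_i(X)-\sigma_i(X-X^*)|$ at index $r$ and bounding it below by $\sum_{i=1}^r(\sigma_i(X)-\sigma_i(X-X^*)) + \sum_{i=r+1}^{n_0}(\sigma_i(X-X^*)-\sigma_i(X))$. Rearranging and using the minimality $\|X^*\|_* \le \|X\|_*$, this yields
\begin{align}
\notag\|(X-X^*)_{[r]^c}\|_* \le \|(X-X^*)_{[r]}\|_* + 2\|X_{[r]^c}\|_*,
\end{align}
which is the $\|\cdot\|_*$-analogue of the corresponding step in Lemma~\ref{le.5}; note that here I do \emph{not} pass to the Frobenius norm via H\"older, since the SRNSP is stated in terms of the nuclear norm of $(X-X^*)_{[r]}$, so this step is actually cleaner.

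Next I would feed this into the SRNSP of $\mathcal{B}$ from Lemma~\ref{le.2}, i.e. $\sqrt{1-\delta_1}\|(X-X^*)_{[r]}\|_* \le \rho\sqrt{1-\delta_1}\|(X-X^*)_{[r]^c}\|_* + \tau\|\mathcal{B}(X-X^*)\|_2$. Substituting the bound on $\|(X-X^*)_{[r]}\|_*$ into the previous inequality, collecting the $\|(X-X^*)_{[r]^c}\|_*$ terms on the left, and dividing by $1-\rho > 0$ gives
\begin{align}
\notag\|(X-X^*)_{[r]^c}\|_* \le \frac{1}{1-\rho}\left(\frac{\tau}{\sqrt{1-\delta_1}}\|\mathcal{B}(X-X^*)\|_2 + 2\|X_{[r]^c}\|_*\right).
\end{align}
Then I would write $\|X-X^*\|_* = \|(X-X^*)_{[r]}\|_* + \|(X-X^*)_{[r]^c}\|_*$, bound the first summand again by the SRNSP of $\mathcal{B}$ (giving $\rho\|(X-X^*)_{[r]^c}\|_* + \frac{\tau}{\sqrt{1-\delta_1}}\|\mathcal{B}(X-X^*)\|_2$), so that $\|X-X^*\|_* \le (1+\rho)\|(X-X^*)_{[r]^c}\|_* + \frac{\tau}{\sqrt{1-\delta_1}}\|\mathcal{B}(X-X^*)\|_2$, and finally plug in the displayed bound on $\|(X-X^*)_{[r]^c}\|_*$ to arrive at~(\ref{eq.22}) with the constants $\frac{2(1+\rho)}{1-\rho}$ and $\frac{2\tau}{(1-\rho)\sqrt{1-\delta_1}}$ after simplifying $(1+\rho)\cdot\frac{2}{1-\rho} = \frac{2(1+\rho)}{1-\rho}$ and $(1+\rho)\cdot\frac{\tau}{(1-\rho)\sqrt{1-\delta_1}} + \frac{\tau}{\sqrt{1-\delta_1}} = \frac{(1+\rho)\tau + (1-\rho)\tau}{(1-\rho)\sqrt{1-\delta_1}} = \frac{2\tau}{(1-\rho)\sqrt{1-\delta_1}}$.

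There is no genuine obstacle here — the argument is structurally identical to Lemma~\ref{le.5}, and the only thing to watch is that the SRNSP controls the nuclear norm $\|(X-X^*)_{[r]}\|_*$ directly rather than $\|(X-X^*)_{[r]}\|_F$, so the intermediate H\"older step $\sum_{i=1}^r \sigma_i \le \sqrt{r}\,\|\cdot\|_F$ used in Lemma~\ref{le.5} is simply dropped, which is what removes the factor $\sqrt{r}$ from the final noise coefficient relative to~(\ref{eq.20}). The mildly delicate point is merely the bookkeeping in the two constant simplifications above; everything else is a direct transcription. I would therefore present the proof compactly, noting at the outset that it parallels that of Lemma~\ref{le.5} and highlighting only the place where the nuclear-norm form of the null space property is used.
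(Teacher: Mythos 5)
Your proposal is correct and is precisely the adaptation the paper intends: the paper's own proof of Lemma~\ref{le.6} is just the remark that it is ``similar to that of Lemma~\ref{le.5}'', and your transcription — keeping $\|(X-X^*)_{[r]}\|_*$ in place of $\sqrt{r}\,\|(X-X^*)_{[r]}\|_F$ and invoking the SRNSP of $\mathcal{B}$ from Lemma~\ref{le.2} instead of the FRRNSP — is exactly what that entails. The constant bookkeeping at the end also checks out, yielding $\frac{2(1+\rho)}{1-\rho}$ and $\frac{2\tau}{(1-\rho)\sqrt{1-\delta_1}}$ as in~(\ref{eq.22}).
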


\begin{proof}[Proof of the lemma \ref{le.6}]
The proof is similar to that of Lemma \ref{le.5}.
\end{proof}

\begin{proof}[Proof of the theorem \ref{th.3}] Combining with the stable rank null space property, Cauchy-Schwarz inequality, Lemmas \ref{le.3} and \ref{le.6}, the desired result is established. This completes the proof.
\end{proof}

\section{Conclusion}
\label{sec7}

\noindent Although the literature on low-rank matrix recovery is almost silent on the impact of pre-measurement noise on recovery performance, this paper certificated that it maybe have an important effect on signal-noise-ratio. Certainly, we indicated that for the widespread measuring formula employed in low-rank matrix recovery, the model with pre-measurement noise is, after whitening, equivalent to a standard model with merely additional noise and a raise in the noise variance by a factor of $mn/M$. We presented bounds on the RIP constants and the SSR constants of new linear measurement map which cleared that as $m,~n,~M\to\infty$ with $mn/M\to 0$, the RIP constants are fundamentally unaltered. As the performance of standard reconstruction approaches is regularly expressed with respect to the RIP constants, this demonstrates that, these approaches manipulate like the standard, as well as noise folding causes a large noise increase. Besides, based on the two kinds of null space properties, we extended the study to the noise folding scenario, established sufficient conditions for robustly reconstructing low-rank matrix itself subject to noise, and provided upper bound estimations of recovery error. Furthermore, the minimal number of measurement such that sufficient condition based on FRRNSP obeys was gained. Numerical simulations are presented to verify the theoretical results.

\section{Acknowledgments}

\noindent This work was supported by Natural Science Foundation of China (Nos. 61673015, 61273020), Fundamental Research Funds for the Central
Universities (Nos. XDJK2015A007, XDJK2018C076, SWU1809002), Youth Science and technology talent development project (No. Qian jiao he KY zi [2018]313), Science and technology Foundation of Guizhou province (No. Qian ke he Ji Chu [2016]1161), Guizhou province natural science foundation in China (No. Qian Jiao He KY [2016]255).

\end{document}